\newtheorem{theorem}{Theorem}[section]
\newtheorem{lemma}[theorem]{Lemma}
\newtheorem{proposition}[theorem]{Proposition}
\newtheorem{definition}[theorem]{Definition}
\newtheorem{corollary}[theorem]{Corollary}
\newtheorem{example}{Example}
\crefname{theorem}{Theorem}{Theorems}
\crefname{lemma}{Lemma}{Lemmas}
\crefname{proposition}{Proposition}{Propositions}
\crefname{definition}{Definition}{Definitions}
\crefname{corollary}{Corollary}{Corollaries}
\crefname{example}{Example}{Examples}
\crefname{section}{Section}{Sections}
\crefname{appendix}{Appendix}{Appendices}
\crefname{table}{Table}{Tables}
\renewcommand{\P}[2][]{{\textnormal{Pr}_{#1}}{\left[#2\right]}}
\newcommand{\indic}[1]{\delta\sof[\big]{#1}} \newcommand{\wtM}{\widetilde{M}}
\newcommand{\wtN}{\widetilde{N}}
\newcommand{\bra}[1]{\langle{#1}|}
\newcommand{\ket}[1]{|{#1}\rangle}
\newcommand{\braket}[2]{\langle{#1}|{#2}\rangle}
\newcommand{\ketbra}[2]{\ket{#1}\bra{#2}}
\newcommand{\proj}[1]{\ketbra{#1}{#1}}
\newcommand{\kb}[1]{\proj{#1}}
\newcommand{\C}{\mathbb{C}}
\newcommand{\eqdef}{:=}
\newcommand{\POVM}[1]{\mathrm{M}(#1)} \newcommand{\PM}[1]{\mathrm{PM}(#1)} \newcommand{\D}[1]{\mathrm{D}(#1)} \newcommand{\U}[1]{\mathrm{U}(#1)} 
\newcommand{\one}{\mathds{1}}
\newcommand{\pg}[5][]{\omega^{#1}_{\textnormal{#2}}(#3|#4)_{#5}}
\newcommand{\reg}[1]{\mathsf{#1}}
\newcommand{\A}{\reg{A}}
\newcommand{\B}{\reg{B}}
\newcommand{\X}{\reg{X}}
\newcommand{\E}{\reg{E}}
\newcommand{\AB}{\reg{AB}}
\newcommand{\XA}{\reg{XA}}
\newcommand{\XB}{\reg{XB}}
\newcommand{\XAB}{\reg{XAB}}
\renewcommand{\a}{\reg{A'}}
\renewcommand{\b}{\reg{B'}}
\newcommand{\ab}{{\a\b}}
\newcommand{\Aa}{{\A\a}}
\newcommand{\Bb}{{\B\b}}
\newcommand{\hilb}[1]{\mathcal{#1}}
\newcommand{\HA}{\hilb{A}}
\newcommand{\HB}{\hilb{B}}
\newcommand{\HX}{\hilb{X}}
\newcommand{\Ha}{\hilb{A'}}
\newcommand{\Hb}{\hilb{B'}}
\newcommand{\da}{d}
\newcommand{\db}{d}
\newcommand{\fset}[1]{\mathscr{#1}}
\newcommand{\SA}{\fset{A}}
\newcommand{\SB}{\fset{B}}
\newcommand{\SM}{\fset{M}}
\newcommand{\SV}{\fset{V}}
\newcommand{\SE}{\fset{E}}
\newcommand{\SX}{\fset{X}}
\DeclarePairedDelimiter{\set}{\lbrace}{\rbrace}
\DeclarePairedDelimiter{\abs}{\lvert}{\rvert}
\DeclarePairedDelimiter{\card}{\lvert}{\rvert} \DeclarePairedDelimiter{\norm}{\lVert}{\rVert}
\DeclarePairedDelimiter{\of}{\lparen}{\rparen}
\DeclarePairedDelimiter{\pr}{\lparen}{\rparen} \DeclarePairedDelimiter{\sof}{\lbrack}{\rbrack}
\newcommand{\mx}[1]{\begin{pmatrix}#1\end{pmatrix}}
\newcommand{\x}{\otimes}
\newcommand{\tp}{^{\mathsf{T}}} \newcommand{\ct}{^{\dagger}}
\DeclareMathOperator{\tr}{tr}
\title{Local simultaneous state discrimination}
\newcommand{\email}[1]{\thanks{\href{mailto:#1}{#1}}}
\newcommand{\aff}[1]{\textit{\normalsize#1}}
\author[1]{Christian Majenz\email{chmaj@dtu.dk}}
\author[2]{Maris Ozols\email{marozols@gmail.com}}
\author[3]{Christian Schaffner\email{c.schaffner@uva.nl}}
\author[4]{Mehrdad Tahmasbi\email{mehrdad@cwi.nl}}
\affil[1]{\aff{Department of Applied Mathematics and Computer Science, Technical University of Denmark}}
\affil[2]{\aff{Institute for Logic, Language, and Computation, Korteweg-de Vries Institute for Mathematics, and Institute for Theoretical Physics, University of Amsterdam and QuSoft}}
\affil[3]{\aff{Institute for Logic, Language, and Computation, University of Amsterdam and QuSoft}}
\affil[4]{\aff{Centrum Wiskunde \& Informatica and QuSoft}}
\begin{document}
\maketitle

\begin{abstract}
	Quantum state discrimination is one of the most fundamental problems studied in quantum information theory. Applications range from channel coding to metrology and cryptography. In this work, we introduce a new variant of this task: Local Simultaneous State Discrimination (LSSD). While previous distributed variants of the discrimination problem always allowed some communication between the parties to come up with a joint answer, the parties in LSSD cannot communicate and have to simultaneously answer correctly. This simultaneity implies, e.g., that for classical states, the problem does not trivialize to a non-distributed distinguishing task. While interesting in its own right, this problem also arises in quantum cryptography.

	 After introducing the problem, we give a number of characterization results. We give examples showing that i) the optimal strategy for local discrimination need not coincide with the optimal strategy for LSSD, even for classical states, ii) an additional entangled resource can increase the optimal success probability in LSSD, and iii) stronger-than-quantum non-signalling resources can allow for a higher success probability in some cases, compared to strategies using entanglement. Finally, we show that finding the optimal strategy in (classical) 3-party LSSD is NP-hard.
\end{abstract}

\section{Introduction}
Discriminating between a known set of quantum states is a well-studied and fundamental problem in quantum information theory, with a vast range of applications ranging from cryptography and quantum computing to quantum information and metrology~\cite{Bae_Kwek_2015}. A referee randomly picks a quantum state from a known set of states and sends it to Alice who tries to determine which state was sent to her. An interesting extension of the problem is \emph{distributed} state discrimination where the states to be distinguished are bi-partite and Alice gets to examine register $\A$ and Bob register $\B$.
In the context of nonlocality, the most commonly considered scenario is \emph{LOCC} where Alice and Bob are allowed to use \emph{local operations and classical communication} in the discrimination process \cite{LOCC}.
For example, any orthonormal set of product states can be prepared by local operations and discriminated by a global one, however discriminating them with only local operations is generally not possible, even when classical communication between parties is allowed \cite{Nonlocality,Framework}.
In the LOCC setting, the discrimination task does not become more demanding by asking Alice and Bob to answer correctly \emph{simultaneously} since the result can be communicated between the parties.

Surprisingly, the more restricted scenario where Alice and Bob can only use \emph{local operations (LO)} without any classical communication has received only little attention in the published literature so far, see below for related work. In this scenario, asking both Alice and Bob to succeed simultaneously makes the task strictly more difficult compared with the case when at least one of the players should succeed. We call the resulting task \emph{local simultaneous state discrimination (LSSD)}.

While LSSD is certainly interesting in its own right, one concrete motivation --- in fact, our original motivation --- comes from quantum cryptography. Here, one line of work has studied \emph{unclonable cryptography}~\cite{Wiesner83,BB84,Gottesman03,Aaronson09,broadbent2019uncloneable,ALLZZ20,CLLZ21,MST21}. An unclonable cryptographic scheme is a scheme where a certain asset (like a token, message or functionality) is encrypted in a way that makes it impossible to copy. Such features are clearly impossible to achieve with purely classical means, and constructions make crucial use of the so-called \emph{quantum no-cloning principle} that states that quantum information, in general, cannot be copied. The general idea of using the no-cloning principle dates back to Wiesner \cite{Wiesner83} who proposed a \emph{quantum money} scheme where banknotes are quantum states, preventing copying. Later, quantum copy protection \cite{Aaronson09,ALLZZ20,ALP20,CMP20} and unclonable encryption \cite{broadbent2019uncloneable} were introduced, which provide more sophisticated assets in an unclonable way. Strengthening the standard encryption security notion of indistinguishability to \emph{indistinguishable unclonability} \cite{broadbent2019uncloneable} yields a security game that requires the adversary to perform LSSD.

Another motivation comes from the foundations of quantum mechanics. Quantum non-locality is a well-studied fundamental feature of quantum theory which has been key to charting the foundations of quantum physics. In particular, the characterization of non-local quantum correlations, both mathematically and operationally, constitutes a decades-old challenge, partially addressed by an impressive body of research (see, e.g., \cite{BellNonlocality} and references therein). This work establishes LSSD as a new and natural member of the zoo of operational problems (like non-local games and zero-error communication settings) where the non-local nature of quantum correlations can provide an advantage over strategies restricted to purely classical means, and stronger-than-quantum non-local correlations (so-called non-signaling boxes) can provide an additional advantage.

\subsection{Our contributions}
In this work, we define and study the problem of \emph{Local Simultaneous State Discrimination (LSSD)} which can be formalized by a tripartite cqq-state $\rho_{\XAB} = \sum_x P(x) \proj{x}_\X \x \rho_\AB^x$, where the referee's register $\X$ is classical and $\rho_{\AB}^x$ are arbitrary two-partite quantum states. Alice and Bob act locally on their respective registers $\A$ and $\B$ to produce guesses $x_A$ and $x_B$. They win the LSSD game if and only if both their guesses correctly identify the value $x$ of the classical register $\X$, i.e., $x = x_A = x_B$. As in non-local games, we can define optimal guessing probabilities by considering strategies for Alice and Bob that use different kinds of resources, namely: 1) shared randomness, 2) additional quantum entanglement, 3) non-signaling correlations. A priori, it is entirely unclear whether these extra resources allow Alice and Bob to increase their simultaneous guessing probability. The LSSD problem can also be naturally extended to more than two simultaneously distinguishing parties.

After setting the stage with these definitions, we provide a number of results for the LSSD problem where $\rho_{\XAB}$ is fully classical, i.e., Alice and Bob receive classical inputs $a,b$, correlated with the referee's $x$ according to a joint distribution $P_{\XAB}$. Our first result, \cref{prop:binary-input-bounds}, establishes that the three simultaneous guessing probabilities coincide if $x,a,b$ are all bits. Additionally, if only $a, b$ are bits, we prove a simple upper bound on the guessing probability with non-local correlations. In contrast, as our main contribution, we provide in \cref{thm:separation} a simple distribution $P_{\XAB}$ for which the three simultaneous guessing probabilities defined above are strictly separated from each other. Hereby, we establish that as for non-local games, having entangled strategies is (in general) strictly more powerful than shared randomness (which in turn is easily seen to be useless, as for non-local games). Also, having stronger non-signaling strategies (using, e.g., a Popescu-Rorlich box \cite{Popescu_Rohrlich_1994}) can be strictly more powerful than entanglement in LSSD.
Finally, in \cref{sec:hardness}, we study the computational complexity of finding optimal simultaneous guessing strategies by investigating (again fully classical) problem instances naturally defined based on $r$-partite hypergraphs. By establishing a connection between simultaneous guessing and finding a maximum matching in 3-partite hypergraphs, we show that finding an optimal classical strategy for the three-party LSSD problem is NP-hard.

\subsection{Related work}
Earlier work by Buscemi~\cite{Buscemi_2012} studied a very general classof distributed tasks called ``semi-quantum'' non-local games where a referee picks from a fixed set a bi-partite quantum state and sends the registers as questions to two players Alice and Bob, and their answers are classical bitstrings. A subclass of such games, namely quantum \emph{XOR games} have been studied in-depth by Regev and Vidick~\cite{Regev_Vidick_2012}. The restriction is that the players' answers are classical bits of which the referee only takes into account their XOR when computing the winning predicate. Our LSSD scenario is a similar subclass of semi-quantum games, where instead of the XOR condition, the players simultaneously have to guess the referee's choice. It is a very interesting open problem to investigate whether some of the results from quantum XOR games carry over to the LSSD setting. For instance, does there exist a family of games that can only be won optimally with an ever-increasing amount of entanglement?

Another notion of \emph{extended non-local games} has been defined and investigated by Russo~\cite{Russo_2017}. In extended non-local games, the referee, Alice and Bob share a quantum state, but the referee's questions and player's answers remain classical. However, the winning predicate is computed by a measurement of the referee. This setting ties in well with monogamy-of-entanglement games~\cite{MonogamyGame}, and it is shown in~\cite{Russo_2017} that some of the results~\cite{Regev_Vidick_2012} from quantum XOR games carry over to this setting. The main difference to our LSSD problem is that the initial quantum state is part of the players' strategy, and not prepared by the referee.

Another line of related work~\cite{Matthews_Wehner_Winter_2009,Lancien_Winter_2013,Lami2018} studies the relation between various distinguishability norms with the goal of maximising the so-called data hiding ratio, i.e.~how much worse restricted sets of measurements (such as local ones) perform in the task of state discrimination versus global measurements. In their setting, the ``local operations'' performed by the players can still be post-processed by the referee (akin to some form of communication), whereas in our LSSD setting, the players simultaneously have to guess the referee's input using only local measurements. This crucial difference is the reason why we observe interesting separations between the guessing probabilities already for the discrimination of fully classical states. When classical post-processing by a referee is allowed, the players can simply forward their classical inputs to the referee. Therefore, interesting effects in that setting only occur when distinguishing quantum inputs.

Very recent work in this line by Corrêa, Lami and Palazuelos~\cite{correa2021maximal} is also concerned with optimal local discrimination. By a clever combination of previous results about data hiding and the noncommutative Grothendieck's theorem, the authors show that the ratio between the optimal global distinguishing measurement between two states and the optimal local measurement is at most $2 \sqrt{2} d$ where $d$ is the local dimension of Alice an Bob's system. Due to the classical post-processing by the referee, their results cannot easily be translated into our LSSD setting.

During the preparation of this manuscript, we have become aware of independent unpublished work by Chitambar and Man\v cinska~\cite{CM21} that also studies the LSSD problem for two bipartite quantum states that are in tensor product. This setting can be seen as a quantum version of our \cref{ex:1} below. It shows the same ``two-regime behavior'', where depending on a parameter, it is better to use the locally optimal discrimination strategy in one regime, whereas in the other regime, it is better for the players to correlate their errors.

\subsection{Open problems}
We believe that LSSD is a fascinating new problem in quantum information processing, as there are many associated open questions. Our results in this article are exclusively\footnote{except \cref{ex:ex2}, which we import from \cite{MST21}}  concerned with the case where the referee uses classical states. How do the different success probabilities behave when distinguishing actual quantum states? Are there dimension constraints like in our \cref{prop:binary-input-bounds} under which the classical and quantum values coincide?

As mentioned above, can the results about quantum XOR games from~\cite{Regev_Vidick_2012} be ported to LSSD? Does there exist a family of games that can only be won optimally with an ever-increasing amount of entanglement? Can we find efficiently computable lower or upper bounds on the various success probabilities?

While we establish the NP hardness of finding optimal classical distinguishing strategies for three parties, it is natural to ask whether the two-party LSSD problem is already hard.

In terms of applications, we suggest to establish more links with uncloneable encryption and possibly with position-based cryptography.

\subsection{Notation}

We will denote by $\indic{\cdot}$ the indicator function that evaluates to one when its argument is true and to zero otherwise. We will use $\SX, \SA, \SB$, respectively, to denote the finite sets from which the inputs to the referee, Alice, and Bob are drawn.
Their joint input is described by a probability distribution $P_\XAB$ on $\SX \times \SA \times \SB$, wher the system $\X$ belongs to the referee while $\A$ and $\B$ belong to Alice and Bob, respectively.
The input and output sets will often be of the form
$[d] \eqdef \set{0,\dotsc,d-1}$,
for some integer $d \geq 1$.

When Alice and Bob's inputs are quantum, the overall input is a classical-quantum-quantum (cqq) state $\rho_\XAB$ where the classical register $\X$ belongs to the referee while the quantum registers $\A$ and $\B$ belong to Alice and Bob, respectively.
We will denote the finite-dimensional complex Euclidean spaces underlying these registers by $\HX = \C^\SX$, $\HA = \C^\SA$, and $\HB = \C^\SB$.

A \emph{quantum state} on $\C^d$ is a $d \times d$ positive semi-definite matrix of unit trace, i.e., $\rho \in \C^{d \times d}$ such that $\rho \succeq 0$ and $\tr \rho = 1$.
We denote the set of all quantum states on $\C^d$ by $\D{\C^d}$.
Operations on quantum states are described by \emph{unitary} matrices, i.e., $U \in \C^{d \times d}$ such that $U\ct U = \one$ where $\one$ is the identity matrix.
We denote the set of all unitaries on $\C^d$ by $\U{\C^d}$.

An $n$-outcome \emph{measurement} or POVM on $\C^d$ is a collection of $n$ positive semi-definite $d \times d$ matrices that sum to identity.
We will denote a measurement by $M = \set{M_1, \dotsc, M_n}$ where $M_i \succeq 0$ and $\sum_{i=1}^n M_i = \one$.
We denote the set of all $n$-outcome measurements on $\C^d$ by $\POVM{\C^d}$ (since the outcome set is always clear from the context, we do not specify it).
If $M_i^2 = M_i$ for all $i = 1, \dotsc, n$, we call the measurement \emph{projective}.
We denote the set of all $n$-outcome projective measurements on $\C^d$ by $\PM{\C^d}$.

\section{Local simultaneous state discrimination (LSSD) problem}
\label{sec:prob-form}

A referee prepares a tripartite system $\XAB$ in a cqq state
\begin{equation}
  \rho_{\XAB} = \sum_{x \in \SX} P_\X(x) \kb{x}_\X \otimes \rho_{\AB}^x
\end{equation}
and passes the $\A$ and $\B$ subsystems to two distant parties, Alice and Bob, respectively, while keeping the system $\X$. Alice and Bob know the state $\rho_{\XAB}$ and might share some resources (as will be precisely quantified later) prior to receiving their states, but no communication is allowed between them afterwards. Based on their received states and pre-shared resources, Alice and Bob output guesses $x_A$ and $x_B$, respectively, to the referee. They win if both guesses are correct, i.e., $x = x_A = x_B$, and they aim at maximizing their probability of winning. 

Most of our results are concerned with the case where $\rho_{\XAB}$ is completely classical, i.e., there exist orthonormal bases $\set{\ket{a}: a \in \SA}$ and $\set{\ket{b}: b \in \SB}$ for $\HA$ and $\HB$, respectively, that are independent of $x \in \SX$, and probability distributions $P_{\AB}^x$ over $\SA \times \SB$ such that
\begin{equation}
  \rho_{\AB}^x = \sum_{\substack{a \in \SA \\ b \in \SB}} P_{\AB}^x(a,b) \proj{a}_\A \x \proj{b}_\B.
\end{equation}

\paragraph{Classical Strategies.}
In this case, there are no additional resources available to Alice and Bob beyond their received state.\footnote{One can equivalently define classical strategies when only shared randomness is allowed between Alice and Bob. However, for the same reason as in non-local games, this purely classical resource does not help, as Alice and Bob could fix their randomness to a realization conditioned on which their probability of winning is maximized.}
The optimal probability of simultaneously guessing $x$ correctly is
\begin{align}
  \pg{c}{\X}{\A;\B}{\rho} \eqdef
  \sup_{\substack{M \in \POVM{\HA} \\ N \in \POVM{\HB}}}
  \sum_{x \in \SX}
  P_\X(x) \tr \sof[\big]{\rho_{\AB}^x \pr{M_x \x N_x}}.
\end{align}
When $\rho_{\XAB}$ is classical and described by a probability distribution $P_{\XAB}$, we can rewrite the optimal probability of winning as
\begin{align}
  \pg{c}{\X}{\A;\B}{P}
  &= \max_{\substack{Q_{\X_a|\A}\\ Q_{\X_b|\B}}} \sum_{\substack{x\in \SX\\a \in \SA , b \in \SB}} P_{\XAB}(x,a,b) Q_{\X_a|\A}(x_a|a) Q_{\X_b|\B}(x_b|b)\\
  &\stackrel{(1)}{=} \max_{f,g} \sum_{\substack{x\in \SX\\a \in \SA , b \in \SB}} P_{\XAB}(x,a,b) \indic{f(a) = g(b) = x},
\end{align}
where the first maximum is taken over all conditional probability distributions $Q_{\X_a|\A}$ and $Q_{\X_b|\B}$, the second maximum is taken over all functions $f: \SA \to \SX$ and $g: \SB \to \SX$, and $(1)$ follows since Alice and Bob can condition any local randomness on the realization that maximizes their probability of winning.

\paragraph{Quantum Strategies.}
In this case, Alice and Bob can share an entangled state prior to receiving their inputs. Let $\Ha = \Hb = \C^d$ be two complex Euclidean spaces of dimension $d$.
Alice and Bob first jointly prepare a quantum state $\sigma_\ab$ on $\Ha \x \Hb$, after which Alice and Bob keep systems $\a$ and $\b$, respectively.
After receiving their inputs, Alice and Bob determine their output by measuring the registers $\Aa$ and $\Bb$ with local measurements $M$ and $N$, respectively (this is the most general strategy because no communication is allowed).

When the local dimensions of the shared entangled state $\sigma_\ab$ are limited to $d$ for both parties, the optimal probability of winning is
\begin{align}
  \pg[d]{q}{\X}{\A;\B}{\rho} \eqdef
  \sup_{\substack{\sigma_\ab \in \D{\C^\da \x \C^\db}}}
  \sup_{\substack{M \in \POVM{\HA \x \C^\da} \\ N \in \POVM{\HB \x \C^\db}}}
  \sum_{x \in \SX} P_\X(x)
  \tr \sof[\big]{\of{\rho_{\AB}^x \x \sigma_\ab} \of{M_x \x N_x}}.
  \label{eq:pd}
\end{align}
When the dimensions of $\a$ and $\b$ are not limited, the optimal winning probability is
\begin{align}
	\pg{q}{\X}{\A;\B}{\rho} \eqdef
  \sup_{d \geq 1} \pg[d]{q}{\X}{\A;\B}{\rho}.
  \label{eq:pq def}
\end{align}
When $\rho_{\XAB}$ is classical and described by a probability distribution $P_{\XAB}$, we can simplify \cref{eq:pd} as follows:
\begin{align}
  \pg[d]{q}{\X}{\A;\B}{P}
 &= \sup_{\sigma_\ab \in \D{\C^\da \x \C^\db}}
    \sup_{\substack{M: \SA \to \POVM{\C^{\da}} \\ N: \SB \to \POVM{\C^{\db}}}}
    \sum_{\substack{x\in \SX \\ a \in \SA , b \in \SB}} P_{\XAB}(x, a, b)
    \tr \sof[\big]{\sigma_\ab \of[\big]{M_x(a) \otimes N_x(b)}} \\
 &= \sup_{\substack{M: \SA \to \POVM{\C^{\da}} \\ N: \SB \to \POVM{\C^{\db}}}}
    \norm[\bigg]{\sum_{\substack{x\in \SX \\ a \in \SA , b \in \SB}} P_{\XAB}(x,a,b) M_x(a) \x N_x(b)},
    \label{eq:pq for distribution}
\end{align}
where $M$ and $N$
are collections of measurements,
i.e., for every input $a \in \SA$ and $b \in \SB$,
we have that
$M(a) = \set{M_x(a) : x \in \SX}$ and
$N(b) = \set{N_x(b) : x \in \SX}$
are measurements on $\C^d$ with outcomes in $\SX$.
We show in \cref{cor:projective} that the optimization in $\pg{q}{\X}{\A;\B}{P}$ can be restricted to projective measurements.

\paragraph{No-signaling Strategies.}
We define no-signaling strategies only when $\rho_{\XAB}$ is classical and described by a probability distribution $P_{\XAB}$. Given classical inputs $a \in \SA$ and $b \in \SB$ for Alice and Bob, respectively, they output their estimates $x_A$ and $x_B$ of $x \in \SX$ according to a conditional probability distribution $Q_{\X_A\X_B|\AB}$ on $\SX \times \SX \times \SA \times \SB$ satisfying
\begin{align}
  \forall x_B, a, a', b: \quad
    \sum_{x_A \in \SX} Q_{\X_A\X_B|\AB}(x_A, x_B | a, b)
 &= \sum_{x_A \in \SX} Q_{\X_A\X_B|\AB}(x_A, x_B | a', b), \label{eq:ns1} \\
  \forall x_A, a, b, b': \quad
    \sum_{x_B \in \SX} Q_{\X_A\X_B|\AB}(x_A, x_B | a, b)
 &= \sum_{x_B \in \SX} Q_{\X_A\X_B|\AB}(x_A, x_B | a, b'). \label{eq:ns2}
\end{align}
An optimal no-signaling strategy succeeds with probability
\begin{align}
  \pg{ns}{\X}{\A;\B}{P} \eqdef
  \sup_{Q_{\X_A\X_B|\AB}}
  \sum_{\substack{x\in \SX \\ a \in \SA , b \in \SB}}
  P_{\XAB}(x,a,b)
  Q_{\X_A\X_B|\AB}(x,x|a,b).
  \label{eq:pns def}
\end{align}

\subsection{Examples}

We discuss here two examples of LSSD games. The first example highlights particular features of LSSD such as the optimal local strategies are not necessarily  optimal for simultaneous guessing, or the optimal guessing probability for product distributions is not the product of  the optimal guessing probability of distributions in general. The second example is related to applications of LSSD to quantum cryptography.

\begin{example} \label{ex:1}
Let $X$, $Y$, and $Z$ be independent binary random variables such that $\P{X=1} = 1/2$, $\P{Y=1} = \P{Z=1} = \alpha$ for some $0\leq \alpha \leq 1/2$. We also set $A \eqdef X\oplus Y$ and $B\eqdef X \oplus Z$ and denote the joint probability mass function of $(X, A, B)$ by $P^\alpha_{\XAB}$. In other words, $A$ and $B$ are independent noisy versions of the uniform bit $X$. Consider the problem of simultaneously guessing $X$ from $A$ and $B$. When $1-\frac{1}{\sqrt{2}}<\alpha < \frac12$, both parties always output $0$ regardless of their inputs, which is a correct guess of $X$ with probability $\frac12$. When  $0\leq \alpha  \leq 1-\frac{1}{\sqrt{2}}$, Alice and Bob estimate $X$ as $A$ and $B$, respectively, which are simultaneously correct when $Y = Z = 0$, an event that has probability $(1-\alpha )^2$. By a brute-force check, one finds that the aforementioned strategies are optimal without any extra resources and therefore
\begin{align}
    \pg{c}{\X}{\A;\B}{P^\alpha} =
    \begin{cases} \frac12 \quad & 1-\frac{1}{\sqrt{2}} \leq \alpha \leq \frac12,\\
    (1-\alpha)^2 \quad & 0\leq \alpha \leq 1-\frac{1}{\sqrt{2}}. \end{cases}
\end{align}
Note that when $1-\frac{1}{\sqrt{2}} \leq \alpha  \leq \frac12$, optimal local estimators of $X$ are not optimal for simultaneous guessing of $X$. We later show in \cref{prop:binary-input-bounds} that when all $X, A, B$ are binary, $\pg{c}{\X}{\A;\B}{P^\alpha }= \pg{q}{\X}{\A;\B}{P^\alpha } =\pg{ns}{\X}{\A;\B}{P^\alpha}$.

As a next observation, we set $\alpha := 1-\frac{1}{\sqrt{2}}$ and let $(X', A', B')$ be  an independent copy of $(X, A, B)$. We consider the simultaneous guessing  of $(X, X')$ from $(A, A')$ and $(B, B')$, and define a strategy as follows: both Alice and Bob output $(1, 1)$ if their input bits are $(1, 1)$ and output $(0, 0)$ otherwise. The probability of simultaneously guessing correctly is
\begin{align}
  \frac{1}{4} (1-\alpha ^2)^2 + \frac{1}{4} (1-\alpha )^4 \approx 0.271447.
\end{align}
Hence, $\pg{c}{\X\X'}{\A\A';\B\B'}{P^\alpha \times P^\alpha} > \pg{c}{\X}{\A;\B}{P^\alpha}\pg{c}{\X'}{\A';\B'}{P^\alpha}$ while $(X, A, B)$ and $(X', A', B')$ are independent. Because $\pg{c}{\X}{\A;\B}{P^\alpha }= \pg{q}{\X}{\A;\B}{P^\alpha} =\pg{ns}{\X}{\A;\B}{P^\alpha}$, we also have
\begin{align}
\pg{q}{\X\X'}{\A\A'; \B\B'}{P^\alpha\times P^\alpha} &> \pg{q}{\X}{\A;\B}{ P^\alpha}\pg{q}{\X'}{\A';\B'}{ P^\alpha}, \\
\pg{ns}{\X\X'}{\A\A'; \B\B'}{P^\alpha\times P^\alpha} &> \pg{ns}{\X}{\A;\B}{ P^\alpha}\pg{ns}{\X'}{\A';\B'}{P^\alpha}.
\end{align}
\end{example}
\begin{example}\label{ex:ex2}
Let $\HA = \HB = \C^3$ with an orthonormal basis $\set{\ket{0}, \ket{1}, \ket{\bot}}$ and let $\ket{\phi^x}_{\A\B}\eqdef \frac{1}{\sqrt{2}}\pr{\ket{x}\otimes \ket{\bot} + \ket{\bot} \otimes \ket{x} }$ for $x\in[2]$. We also set $\rho_{\X\A\B} \eqdef \frac12 \sum_{x\in[2]} \kb{x}_{\X} \otimes \kb{\phi^x}_{\A\B}$. The authors of \cite{MST21} showed that $\pg{c}{\X}{\A;\B}{\rho} \geq \frac{9}{16}$ and used this fact to prove impossibility of uncloneable encryption, as defined in \cite{MST21}, using pure states as ciphertext.
\end{example}

\section{Strict quantum and no-signaling separations for LSSD}

Our main result is the following theorem that gives a simple example of an LSSD problem for which the guessing probabilities for players with different types of shared resources are all distinct.
Namely,
$\pg{c}{\X}{\A;\B}{} < \pg{q}{\X}{\A;\B}{} < \pg{ns}{\X}{\A;\B}{}$.

\begin{theorem}\label{thm:separation}
Let $\SX = \set{0,1,2}$ and $\SA = \SB = \set{0,1}$, and let $P_{\XAB}$ be the uniform distribution over $\set{(0,1,0), (0,1,1), (1,0,0), (1,1,0), (2,0,1)}$. Then
\begin{align}
  \pg{c}{\X}{\A;\B}{P} &= 2/5 = 0.4, \label{eq:pc} \\
  \pg{q}{\X}{\A;\B}{P} =
  \pg[2]{q}{\X}{\A;\B}{P} &= \frac{16+\sqrt{13}}{45} \approx 0.435679, \label{eq:pq} \\
  \pg{ns}{\X}{\A;\B}{P} &= 1/2 = 0.5. \label{eq:pns}
\end{align}
\end{theorem}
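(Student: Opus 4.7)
The proof splits into three separate claims, one per value, which I attack in order of difficulty. For the classical value, using reformulation $(1)$ in \cref{sec:prob-form}, $\pg{c}{\X}{\A;\B}{P}$ is $\tfrac{1}{5}$ times the maximum number of tuples in the support $S = \{(0,1,0), (0,1,1), (1,0,0), (1,1,0), (2,0,1)\}$ that some functions $f, g \colon \{0,1\} \to \{0,1,2\}$ can satisfy simultaneously via $f(a) = g(b) = x$. The constant strategy $f(a) = g(b) = 0$ satisfies the first two tuples, giving the lower bound $2/5$. For the matching upper bound I would check that each of the $\binom{5}{3} = 10$ triples of support tuples contains a conflict: either the pair $\{(0,1,0), (1,1,0)\}$, which share the input $(a,b) = (1,0)$ with distinct $x$, or two tuples with incompatible requirements at one of the four slots $f(0), f(1), g(0), g(1)$; a small table of slot requirements makes this a routine enumeration.

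For the no-signaling value, the lower bound comes from an explicit non-signaling box with marginals $p_A(\cdot|0) = (0, \tfrac12, \tfrac12)$, $p_A(\cdot|1) = (\tfrac12, \tfrac12, 0)$, $p_B(\cdot|0) = (\tfrac12, \tfrac12, 0)$, $p_B(\cdot|1) = (\tfrac12, 0, \tfrac12)$ in which, for each input pair $(a,b)$, the conditional $Q(\cdot, \cdot | a, b)$ is supported on exactly two outcomes, one of which is $(x,x)$ for the corresponding tuple of $S$ with weight $\tfrac12$. Verifying \cref{eq:ns1,eq:ns2} is direct and gives value $1/2$. For the matching upper bound I would invoke LP duality: by complementary slackness the tight constraints at the primal box above determine the structure of a dual certificate, which combines the non-signaling marginal bounds $Q(x,x|a,b) \le p_A(x|a)$ and $Q(x,x|a,b) \le p_B(x|b)$ (immediate from \cref{eq:ns1,eq:ns2}) with normalization to certify $\sum_{(x,a,b) \in S} Q(x,x|a,b) \le 5/2$ for every feasible $Q$.

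For the quantum value the lower bound comes from an explicit qubit strategy. By \cref{cor:projective} the optimization can be restricted to projective measurements, and on $\C^2$ any three-outcome projective measurement has at most two nonzero effects; since outputting a label outside the ``useful'' set for a given input always loses, the label assignment is forced (e.g.\ $\{1,2\}$ for Alice on $a = 0$, $\{0,1\}$ on $a = 1$, and $\{0,1\}, \{0,2\}$ for Bob on $b = 0, 1$), and each party's strategy reduces to a choice of rank-one projector per input. Substituting into \cref{eq:pq for distribution} reduces the problem to computing $\lambda_{\max}(W)$ for an explicit $4 \times 4$ Hermitian operator $W$ on $\C^2 \otimes \C^2$ parameterized by four Bloch angles, and elementary calculus in those angles yields $(16 + \sqrt{13})/45$. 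The matching upper bound --- which also establishes $\pg{q}{\X}{\A;\B}{P} = \pg[2]{q}{\X}{\A;\B}{P}$ --- is the main obstacle. I would attack it via SDP duality applied to \cref{eq:pq for distribution}: guided by the primal optimum, construct operators $\Lambda_\A, \Lambda_\B$ such that $\Lambda_\A \otimes I + I \otimes \Lambda_\B$ dominates the objective operator in positive-semidefinite order for every choice of POVMs $M, N$, with trace equal to $(16 + \sqrt{13})/45$. Equivalently, the first level of the NPA hierarchy should be tight, and the appearance of $\sqrt{13}$ suggests the dual certificate reduces to a small algebraic eigenvalue problem in the Gram matrix of the measurement operators. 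This step typically demands reverse-engineering the correct operators by hand and verifying positivity, in contrast to the routine finite enumeration and LP duality arguments for the other two values.
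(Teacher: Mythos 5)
Your classical and no-signaling arguments are essentially sound and can be completed, while your quantum upper bound is only an aspirational plan with a real gap.

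On the classical value, your direct enumeration of the $\binom{5}{3}$ triples of support tuples works (I checked: every triple contains either $(0,1,0)$ and $(1,1,0)$, which conflict at the shared input $(1,0)$, or a pair that forces contradictory values at one of $f(0),f(1),g(0),g(1)$). The paper instead derives and invokes its characterization \cref{lm:binary-output-pg}, which is more work up front but also yields the no-signaling characterization.

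On the no-signaling value, your route is genuinely different from the paper's and is the nicer of the two once completed. The paper relies on the classification of extreme points of the bipartite no-signaling polytope with two inputs per party (via \cite{ExtremalPoints}, packaged into \cref{lm:binary-output-pg}). You instead exhibit an explicit feasible box with value $1/2$ and claim an LP dual certificate exists. You do not write down the certificate, but it does exist, and it is simple. Writing $q_1=Q(0,0|1,0)$, $q_2=Q(0,0|1,1)$, $q_3=Q(1,1|0,0)$, $q_4=Q(1,1|1,0)$, $q_5=Q(2,2|0,1)$, normalization at input $(1,0)$ gives $q_1+q_4\le 1$, Bob's marginal at $b=0$ gives $q_1+q_3\le 1$, Bob's marginal at $b=1$ gives $q_2+q_5\le 1$, Alice's marginal at $a=0$ gives $q_3+q_5\le 1$, and Alice's marginal at $a=1$ gives $q_2+q_4\le 1$; summing yields $2\sum_i q_i\le 5$, i.e.\ value $\le 1/2$. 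You should spell this out rather than gesture at complementary slackness, but the idea is correct and avoids citing the polytope classification.

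On the quantum value, your lower-bound sketch (restrict to projectors, discard impossible labels via \cref{prop:zero-povm}, parameterize by Bloch angles, optimize) matches the paper's construction in spirit. The upper bound is where your proposal does not constitute a proof. You propose to find operators $\Lambda_\A\otimes I+I\otimes\Lambda_\B$ that dominate the objective operator ``for every choice of POVMs'' --- but this is not a standard SDP dual of \cref{eq:pq for distribution} (the objective operator itself depends on the POVMs, so you would need a certificate quantified over all measurement collections, and you neither construct it nor explain why one of this specific form exists). You also assert that ``the first level of the NPA hierarchy should be tight'' without justification; there is no a priori reason for that, and the paper does not establish or use it. Crucially, you never address why it suffices to consider $d=2$: your discussion of rank-one projectors already presupposes qubits, and the parenthetical claim that the argument ``also establishes $\pg{q}{\X}{\A;\B}{P}=\pg[2]{q}{\X}{\A;\B}{P}$'' is exactly the step that is missing. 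The paper handles this with Jordan's lemma (CS decomposition): after reducing to two two-outcome projective measurements per party, the two projectors on each side can be simultaneously block-diagonalized into $2\times 2$ and $1\times 1$ blocks, the operator $\Omega$ then inherits a block structure, and its top eigenvalue must live in one block, so $d=2$ suffices. The paper then bounds the top eigenvalue of the resulting $4\times 4$ family by exhibiting an explicit sum-of-squares certificate for the characteristic polynomial (\cref{lem:sos}). None of this appears in your proposal, so the $d\ge 3$ case and the $d=2$ eigenvalue bound are both unproven as written.
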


Our proof relies on the following characterization of the classical and no-signaling guessing probabilities $\pg{c}{\X}{\A;\B}{P}$ and $\pg{ns}{\X}{\A;\B}{P}$ when $\abs{\SA} = \abs{\SB} = 2$ (see \cref{sec:pf-binary-output-pg} for proof).

\begin{restatable}{lemma}{pguess}\label{lm:binary-output-pg}
Let $P_{\XAB}$ be a probability distribution over $\SX \times \SA \times \SB$ with $\SA = \SB = \set{0,1}$ and $\SX = [d]$, $d \geq 2$.
The classical and no-signaling winning probabilities for $P_\XAB$ are given by
\begin{align}
  \pg{c}{\X}{\A;\B}{P}
 &= \max_{\substack{s,t \in \SX \\ s \neq t}}
    \max
    \set[\Big]{
      P_\X(s),
      P_{\XAB}(s, 0, 0) + P_{\XAB}(t, 1, 1),
      P_{\XAB}(s, 0, 1) + P_{\XAB}(t, 1, 0)
    }, \label{eq:cprob} \\
  \pg{ns}{\X}{\A;\B}{P}
 &= \max
    \set[\Big]{
      \pg{c}{\X}{\A;\B}{P},
      \max_{k \in \set{2, \dotsc, d}}
      \max_{f, g}
      \sum_{\substack{x\in\SX\\a\in\SA,b\in\SB}} P_\XAB(x,a,b)
      Q^k_{\X_A\X_B|\AB} \of[\big]{f(x,a),g(x,b)|a,b}
    },
    \label{eq:nsprob}
\end{align}
where the final maximization in \cref{eq:nsprob} is over all functions
$f: \SX \times \SA \to \SX$ and $g: \SX \times \SB \to \SX$ such that $f(\cdot,a), g(\cdot,b): \SX \to \SX$ are permutations for every $a \in \SA$ and $b \in \SB$,
and the conditional probability distribution $Q^k_{\X_A\X_B|\AB}$ on $\SX \times \SX \times \SA \times \SB$ is given by
\begin{equation}
  Q^k_{\X_A\X_B|\AB}(x_A,x_B|a,b) :=
  \begin{cases}
    \frac{1}{k} & \text{if $x_A,x_B \in [k]$ and $(x_A - x_B) \bmod k = ab$}, \\
    0 & \text{otherwise}.
  \end{cases}
  \label{eq:Qk}
\end{equation}
\end{restatable}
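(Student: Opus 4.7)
The plan is to handle the classical value \cref{eq:cprob} and the no-signaling value \cref{eq:nsprob} separately, treating the vertex classification of the no-signaling polytope for binary inputs as the main nontrivial step.

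\emph{Classical value.} I would start from the deterministic reduction already stated in the main text, $\pg{c}{\X}{\A;\B}{P} = \max_{f,g}\sum_{x,a,b} P_\XAB(x,a,b)\,\indic{f(a)=g(b)=x}$ with $f,g:\set{0,1}\to\SX$, and enumerate cases by whether $f$ and $g$ are constant or injective. If $f\equiv s$ and $g\equiv s$, the sum equals $P_\X(s)$, and any one-constant/one-non-constant case is easily bounded by the corresponding all-constant case. If both are non-constant, write $f(0)=s,\,f(1)=t$ and $g(0)=u,\,g(1)=v$ with $s\neq t$ and $u\neq v$: the winning outcomes are then exactly those $(a,b)$ for which $f(a)=g(b)$, and the sum is maximized when $\set{s,t}=\set{u,v}$, giving $P_\XAB(s,0,0)+P_\XAB(t,1,1)$ when $g$ preserves the order of $f$ and $P_\XAB(s,0,1)+P_\XAB(t,1,0)$ when it reverses it. Configurations where $\set{s,t}$ and $\set{u,v}$ share only one element give a single summand and are dominated by these. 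Taking the maximum over ordered pairs $s\neq t$ produces \cref{eq:cprob}.

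\emph{No-signaling value.} The winning probability is a linear function of $Q_{\X_A\X_B|\AB}$, so its supremum over the no-signaling polytope is attained at a vertex. My plan is to invoke the classification of extreme no-signaling boxes for bipartite scenarios with binary inputs (due to Barrett-Kent-Pironio-Popescu-Roberts for $d=2$, and extended to arbitrary output cardinality in follow-up work): every vertex is either local deterministic, or equals a generalized PR box $Q^k$ from \cref{eq:Qk} with $k\in\set{2,\dotsc,d}$ after applying input-dependent output permutations $\pi_a,\rho_b$ of $\SX$. Writing such a relabeled vertex as $Q(x_A,x_B|a,b)=Q^k(\pi_a(x_A),\rho_b(x_B)|a,b)$ and setting $f(x,a):=\pi_a(x)$ and $g(x,b):=\rho_b(x)$ (which are permutations of $\SX$ for every fixed $a$ or $b$) turns the objective into exactly the $Q^k$-term of \cref{eq:nsprob}; the local deterministic vertices contribute $\pg{c}{\X}{\A;\B}{P}$ by the classical part. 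Taking the maximum over vertex types then yields \cref{eq:nsprob}.

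\emph{Main obstacle.} The bulk of the work lies in the vertex classification for the NS polytope with binary inputs and $d$-valued outputs. If a direct proof is needed, I would parametrize an NS box by its marginals $P_A(\cdot|a),\,P_B(\cdot|b)$ and a residual correlation tensor $C^{ab}$ whose rows and columns sum to zero, and argue that at a vertex positivity must be saturated on many entries: this forces both marginals to become uniform on a common support of some size $k$ (after permutation) and $C^{ab}$ to realize the cyclic pattern $(x_A-x_B)\bmod k = ab$ defining $Q^k$. Carrying this out carefully for $d\geq 3$ is the only non-routine step; the classical enumeration and the LP reduction to vertices are both standard.
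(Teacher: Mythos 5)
Your approach is essentially the same as the paper's for both parts: you compute the classical value by enumerating deterministic strategies $f,g:\set{0,1}\to\SX$ (constant versus non-constant, then observing that the optimal non-constant pair satisfies $\set{f(0),f(1)}=\set{g(0),g(1)}$), and you compute the no-signaling value by restricting the linear program to vertices and invoking the Barrett--Linden--Massar--Pironio--Popescu--Roberts classification of non-local extreme points for a bipartite scenario with two inputs and $d$ outputs. One step you gloss over: that classification is stated up to \emph{reversible local relabeling}, which allows input permutations as well as input-conditioned output permutations, whereas your parametrization by permutations $f(\cdot,a),g(\cdot,b)$ uses output relabeling only. The paper closes this gap by noting that for the specific boxes $Q^k$, a swap of a party's two inputs can be simulated by appropriate input-conditioned output permutations on both sides (so dropping input relabeling loses no generality); without this observation it is not immediate that the maximization over $f,g$ is exhaustive. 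Also a minor attribution note: BLM+05 (\cite{ExtremalPoints}, Theorem~1) already treats arbitrary output cardinality $d$, so no follow-up extension is needed, and thus the ``main obstacle'' you flag is already resolved in the literature the paper cites.
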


\begin{proof}[Proof (of \cref{thm:separation})]
The given distribution $P_{\XAB}$ has $P_\X(0) = P_\X(1) = 2/5$, $P_\X(2) = 1/5$, and $P_{\XAB}(x,a,b) \leq 1/5$ for all $x,a,b$. \Cref{eq:pc} then follows by applying \cref{lm:binary-output-pg}.
An explicit strategy achieving success probability $2/5$ is when both parties ignore their inputs and always output~$0$.

\begin{table}[!ht]
\centering
\begin{tabular}{c|ccc}
  $x$ & 0 & 1 & 2 \\
  \hline
  $f(x,0)$ & 2 & \textbf{1} & \textbf{0} \\
  $f(x,1)$ & \textbf{0} & \textbf{1} & 2 \\
  $g(x,0)$ & \textbf{0} & \textbf{1} & 2 \\
  $g(x,1)$ & \textbf{1} & 2 & \textbf{0}
\end{tabular}
\qquad\qquad
\begin{tabular}{ccc|c|c|c}
  $x$ & $a$ & $b$ & $ab$ & $f(x,a)$ & $g(x,b)$ \\
  \hline
  0 & 1 & 0 & 0 & 0 & 0 \\
  0 & 1 & 1 & 1 & 0 & 1 \\
  1 & 0 & 0 & 0 & 1 & 1 \\
  1 & 1 & 0 & 0 & 1 & 1 \\
  2 & 0 & 1 & 0 & 0 & 0
\end{tabular}
\caption{(Left)~An optimal choice of functions $f$ and $g$ for no-signaling strategies, see \cref{eq:nsprob}.
(Right)~We verify that for any $(x,a,b)$ with $P_{\XAB}(x) > 0$, $f(x,a), g(x,b) \in \set{0,1}$ (in bold) and $f(x,a) \oplus g(x,b) = ab$, hence this choice is compatible with \cref{eq:Qk} when $k = 2$.}
\label{tab:fg}
\end{table}

Next, let us prove \cref{eq:pns}.
Since $|\SX| = 3$, we only need to consider $k = 2$ and $k = 3$ in \cref{eq:nsprob} of \cref{lm:binary-output-pg}.
Note from \cref{eq:Qk} that $Q^k_{\X_A\X_B|\AB} \of{x_A,x_B|a,b} \leq \frac{1}{k}$ for any $x_A,x_B,a,b$,
so the corresponding term in \cref{eq:nsprob} is at most
\begin{align}
  \sum_{\substack{x\in\SX\\a\in\SA,b\in\SB}} P_\XAB(x,a,b)
  Q^k_{\X_A\X_B|\AB} \of[\big]{f(x,a),g(x,b)|a,b}
  \leq \frac{1}{k} \sum_{\substack{x\in\SX\\a\in\SA,b\in\SB}} P_\XAB(x,a,b)
  = \frac{1}{k}.
  \label{eq:1 over k}
\end{align}
If $k = 2$ and we choose $f, g: [3] \times [2] \to [3]$ according to \cref{tab:fg} then, for all $(x,a,b)$ with $P_{\XAB}(x,a,b) > 0$, we have $f(x,a), g(x,b) \in \set{0,1}$ and $f(x,a) \oplus g(x,b) = ab$, so the inequality in \cref{eq:1 over k} becomes tight.
According to \cref{eq:nsprob}, this lower bounds the success probability by $1/2$.
Since $k = 3$ can lower bound it by at most $1/3$, we do not need to consider this case.
Thus, according to \cref{lm:binary-output-pg}, $\pg{ns}{\X}{\A;\B}{P} = \max \set{2/5, 1/2} = 1/2$ which proves \cref{eq:pns}.

It remains to prove \cref{eq:pq}.
Let us denote the claimed optimal quantum value in \cref{eq:pq} by
\begin{equation}
  t_* := \frac{16+\sqrt{13}}{45}.
\end{equation}
We will first settle the case when the local dimension of the shared entangled state is $d = 2$, i.e., each party has a single qubit, and then reduce the general case of $d \geq 2$ to this one.

Towards establishing \cref{eq:pq}, let us first prove that $\pg[2]{q}{\X}{\A;\B}{P} \geq t_*$.
Alice and Bob can achieve the value $t_*$ by using the following strategy.
Their shared two-qubit state is
\begin{align}
  \ket{\sigma}_\ab &:= s_+ \ket{00}_\ab + s_- \ket{11}_\ab, &
  s_\pm := \sqrt{\frac{1}{2} \pm \frac{1}{78} \sqrt{715 - 182 \sqrt{13}}}.
\end{align}
To describe their measurements, we denote the qubit state at angle $\theta$ and the corresponding projector by
\begin{align}
  \ket{\psi(\theta)}
&:= \cos \theta \, \ket{0}
  + \sin \theta \, \ket{1}
  = \mx{\cos \theta \\ \sin \theta}, &
  \Pi(\theta)
&:= \proj{\psi(\theta)}
  = \mx{
      \cos^2 \theta & \cos \theta \sin \theta \\
      \sin \theta \cos \theta & \sin^2 \theta
    }.
  \label{eq:Pi}
\end{align}
Depending on their local inputs $a,b \in \set{0,1}$, Alice and Bob apply the projective measurements
$M(a) := \set{M_0(a), M_1(a), M_2(a)}$ and
$N(b) := \set{N_0(b), N_1(b), N_2(b)}$
given in \cref{tab:povms}.

\begin{table}[!ht]
\centering
\begin{tabular}{c|ccc}
  $x$ & 0 & 1 & 2\\
  \hline
  $M_x(0)$ & 0 & $\Pi(\alpha_0)$ & $\one - \Pi(\alpha_0)$ \\
  $M_x(1)$ & $\Pi(\alpha_1)$ & $\one - \Pi(\alpha_1)$ & 0 \\
  $N_x(0)$ & $\Pi(\beta_0)$ & $\one - \Pi(\beta_0)$ & 0 \\
  $N_x(1)$ & $\Pi(\beta_1)$ & 0 & $\one - \Pi(\beta_1)$
\end{tabular}
\caption{Measurements for Alice and Bob's quantum strategies. The projector $\Pi(\theta)$ is defined in \cref{eq:Pi} and their angles are given in \cref{eq:angles}.}
\label{tab:povms}
\end{table}

For each measurement, one of their operators is $0$ while the other two are of the form $\Pi(\theta)$ and $\one - \Pi(\theta)$, for some angles $\theta \in [-\pi/2,\pi/2]$.
The angles used in \cref{tab:povms} are chosen as follows:
\begin{align}
  \of{\alpha_0, \alpha_1, \beta_0, \beta_1}
  &:= \of{-\theta_1, \theta_2, \tfrac{\pi}{2} - \theta_2, \theta_1}, &
  \theta_1 &:= \tfrac{1}{4} \arccos\of*{\tfrac{ 121 + 52 \sqrt{13}}{477}}, &
  \theta_2 &:= \tfrac{1}{4} \arccos\of*{\tfrac{-431 +  4 \sqrt{13}}{477}}.
  \label{eq:angles}
\end{align}
The angles $\theta_1$ and $\theta_2$ satisfy
$\cos(4 \theta_1) = 12 + 13 \cos(4 \theta_2)$
and have the following explicit cosines:
\begin{align}
  \cos \theta_1 &= \sqrt{\frac{1}{318} \of[\bigg]{159 + \sqrt{689 \of[\big]{23 + 2 \sqrt{13}}}}}, &
  \cos \theta_2 &= \sqrt{\frac{1}{318} \of[\bigg]{159 + \sqrt{ 53 \of[\big]{23 + 2 \sqrt{13}}}}}.
\end{align}
Using a computer algebra system, one can verify that
\begin{equation}
  \bra{\sigma}_\ab
  \of[\bigg]{\sum_{\substack{x\in\SX\\a\in\SA,b\in\SB}} P_\XAB(x,a,b) M_x(a) \x N_x(b) }
  \ket{\sigma}_\ab
  = \frac{16 + \sqrt{13}}{45}
  = t_*.
\end{equation}
In fact, $\ket{\sigma}_\ab$ is the principal eigenvector of the above operator.\footnote{Indeed, one can check that its eigenvalues are
$\frac{16 + \sqrt{13}}{45}$,
$\frac{25 + \sqrt{13}}{90}$,
$\frac{ 7 + \sqrt{13}}{45}$,
$\frac{19 - 5 \sqrt{13}}{90}$.}

Next, let us prove that the above strategy is optimal if the shared entangled state has local dimension $d = 2$ and Alice and Bob use only projective measurements (we will later reduce the case of general measurements in any finite dimension $d$ to this).
For now, our goal is to show that
\begin{equation}
  \sup_{\substack{\Pi: \SA \to \PM{\C^2} \\ \Sigma: \SB \to \PM{\C^2}}}
  \norm[\Big]{\sum_{\substack{x\in\SX\\a\in\SA,b\in\SB}} P_{\XAB}(x,a,b) \Pi_x(a) \x \Sigma_x(b)}
  \leq t_*.
\end{equation}
First, by \cref{prop:zero-povm} we can assume that
\begin{align}
  \Pi_0(0) = \Pi_2(1) = \Sigma_2(0) = \Sigma_1(1) = 0
\end{align}
since Alice should not guess $0$ if $a=0$ and $2$ if $a=1$,
and Bob should not guess $2$ if $b=0$ and $1$ if $b=1$.
The remaining operators form two $2$-outcome projective measurements for each party:
\begin{align}
  \Pi(0) &= \set{\Pi_1(0), \Pi_2(0)}, &
  \Pi(1) &= \set{\Pi_0(1), \Pi_1(1)}, &
  \Sigma(0) &= \set{\Sigma_0(0), \Sigma_1(0)}, &
  \Sigma(1) &= \set{\Sigma_0(1), \Sigma_2(1)}.
\end{align}
To simplify notation, let us set
$(A_0, A_1, B_0, B_1) := \of[\big]{\Pi_0(0), \Pi_0(1), \Sigma_0(0), \Sigma_0(1)}$
so that
\begin{align}
  \Pi(0) &= \set{A_0, A_0^\perp}, &
  \Pi(1) &= \set{A_1, A_1^\perp}, &
  \Sigma(0) &= \set{B_0, B_0^\perp}, &
  \Sigma(1) &= \set{B_1, B_1^\perp}
  \label{eq:measurements}
\end{align}
where
$A_i^\perp := \one - A_i$ and
$B_i^\perp := \one - B_i$.
Our matrix of interest is then
\begin{align}
  \Omega &:=
  \sum_{\substack{x\in\SX\\a\in\SA,b\in\SB}} P_\XAB(x,a,b) \Pi_x(a) \x \Sigma_x(b)\\
 &= \frac{1}{5}
    \of[\big]{
      \Pi_0(1) \x \Sigma_0(0)
    + \Pi_0(1) \x \Sigma_0(1)
    + \Pi_1(0) \x \Sigma_1(0)
    + \Pi_1(1) \x \Sigma_1(0)
    + \Pi_2(0) \x \Sigma_2(1)
    } \\
 &= \frac{1}{5}
    \of[\big]{
      A_1 \x B_0
    + A_1 \x B_1
    + A_0 \x B_0^\perp
    + A_1^\perp \x B_0^\perp
    + A_0^\perp \x B_1^\perp
    }.
  \label{eq:AB}
\end{align}

We see from \cref{eq:measurements} that if any of the remaining Alice's measurement operators is $0$ then all her operators commute.
By \cref{lem:comm-meas} their winning probability cannot exceed the classical value
$\pg{c}{\X}{\A;\B}{P} = 2/5$.
Hence, all remaining Alice's measurement operators are rank-$1$, and similarly for Bob.

By applying a local unitary change of basis on Alice and Bob's systems, we can assume without loss of generality that, for some angles $\alpha, \beta \in [0,2\pi]$,
\begin{align}
  A_0 &= \mx{1&0\\0&0}, &
  A_1 &= \Pi\of*{\frac{\alpha}{2}}, &
  B_0 &= \mx{1&0\\0&0}, &
  B_1 &= \Pi\of*{\frac{\pi-\beta}{2}},
\end{align}
where $\Pi(\theta)$ is the projector defined in \cref{eq:Pi}.
With this choice, $\Omega$ from \cref{eq:AB} can be written as
\begin{equation}
  \Omega = \mx{
   -(a+1) (b-3) & (a+1) \sqrt{1-b^2} & -\sqrt{1-a^2} (b-3) & \sqrt{1-a^2} \sqrt{1-b^2} \\
    (a+1) \sqrt{1-b^2} & ab-a+b+7 & \sqrt{1-a^2} \sqrt{1-b^2} & \sqrt{1-a^2} (b-1) \\
   -\sqrt{1-a^2} (b-3) & \sqrt{1-a^2} \sqrt{1-b^2} & ab-3a+b+5 & -(a+1) \sqrt{1-b^2} \\
    \sqrt{1-a^2} \sqrt{1-b^2} & \sqrt{1-a^2} (b-1) & -(a+1) \sqrt{1-b^2} & -ab-b+a+5
  }
\end{equation}
where $a := \cos \alpha$ and $b := \cos \beta$.
Our goal is to show that $\norm{\Omega} \leq t_*$ over all $a,b \in [-1,1]$.
Using a computer algebra system we find that the characteristic polynomial of $\Omega$ in variable $t$ is
\begin{equation}
  f(t,a,b)
  = t^4 - t^3
  + \frac{32 + (1+a)(1+b)}{100} t^2
  - \frac{16 + 3 (1+a)(1+b)}{500} t
  + \frac{(1+a)(1+b) \of[\big]{4 - (1-a)(1-b)}}{5000}.
\end{equation}
Since the largest eigenvalue of $\Omega$ is equal to the largest root of $f$,
our goal is to show that $f$ has no roots $t > t_*$.
In \cref{lem:sos} in \cref{apx:sos} we find an exact sum of squares decomposition for $f$ which shows that
$f(t,a,b) > 0$ for any $t > t_*$ and $a,b \in [-1,1]$.
This implies that $f$ has no roots larger than $t_*$.

It remains to show that $\pg{q}{\X}{\A;\B}{P} \leq t_*$.
We will do this by reducing a general strategy to the above $d = 2$ problem.
Let us fix any dimension $d \geq 2$ and consider arbitrary local quantum strategies for Alice and Bob.
They are based on a shared state $\ket{\sigma}_\ab \in \C^\da \x \C^\db$ and collections of measurements $M: \SA \to \POVM{\C^\da}$ and $N: \SB \to \POVM{\C^\db}$.
By invoking \cref{prop:zero-povm} and then \cref{cor:projective} we can reduce $M$ and $N$ to two two-outcome projective measurements that look the same as in \cref{eq:measurements}, except that $A_i$ and $B_i$ are projectors in some finite-dimensional space $\C^{d'}$ where $d' \leq d \max \set{|\SA|, |\SB|}$.

For now, let us focus just on Alice's measurements.
They are fully parameterized by two projectors, $A_0$ and $A_1$.
By Jordan's Lemma~\cite{Jordan} (also known as \emph{CS decomposition}~\cite{CS}), there is a unitary change of basis on Alice's system that simultaneously block-diagonalizes $A_0$ and $A_1$:
\begin{align}
  A_0 &= \mx{
    \bigoplus_{j=1}^k \mx{1&0\\0&0} \\
    & \one \\
    && \one \\
    &&& 0 \\
    &&&& 0
  }, &
  A_1 &= \mx{
    \bigoplus_{j=1}^k \Pi(\theta_j) \\
    & \one \\
    && 0 \\
    &&& \one \\
    &&&& 0
  }.
\end{align}
Here the first $k$ blocks are of size $2 \times 2$ and contain rank-$1$ projectors onto $1$-dimensional subspaces at angle $\theta_j$ between them, see \cref{eq:Pi}.
The remaining blocks are $1 \times 1$ and contain values $(1,1)$, $(1,0)$, $(0,1)$, and $(0,0)$
(the number of times each pair occurs is determined by the sizes of the identity and all-zeroes matrices).
Note that $A_0^\perp$ and $A_1^\perp$ have similar block decompositions in the same basis.

We are interested in the largest eigenvalue of $\Omega$ defined in \cref{eq:AB}.
Since all Alice's projectors are block-diagonal, $\Omega$ is also block-diagonal (each Alice's block gets tensored by Bob's operator).
Since the largest eigenvalue of $\Omega$ must occur in one of these blocks, Alice might as well restrict her strategy to this single block.
Since each of her blocks has size at most two, her strategy does not require more than two dimensions.
By a similar argument, Bob's system can also be reduced to two dimensions.
Since we already analyzed strategies based on orthogonal measurements on a shared state with local dimension two, the same upper bound $t_*$ also applies to the general case.
\end{proof}

In the following proposition, we show that the example presented in \cref{thm:separation} is the ``smallest'' example illustrating a separation between the $\pg{c}{\X}{\A;\B}{P}$ and $\pg{ns}{\X}{\A;\B}{P}$ in a sense that when $\SA$ and $\SB$ have cardinality two, three is the minimum cardinality of $\SX$ such that there exists such a separation. We also upper-bound the gap between $\pg{c}{\X}{\A;\B}{P}$ and $\pg{ns}{\X}{\A;\B}{P}$ when $\SA$ and $\SB$ have cardinality two and $\SX$ is arbitrary.

\begin{proposition}
\label{prop:binary-input-bounds}
Let $P_\XAB$ be such that $\card{\SA} =  \card{\SB} = 2$. If $\card{\SX} = 2$ then
\begin{align}
    \pg{c}{\X}{\A;\B}{P} = \pg{q}{\X}{\A;\B}{P} = \pg{ns}{\X}{\A;\B}{P}.\label{eq:pg-bin}
\end{align}
If $\card{\SX} > 2$ then
\begin{align}
    \pg{ns}{\X}{\A;\B}{P} \leq \min \set[\Big]{2\pg{c}{\X}{\A;\B}{P}, \pg{c}{\X}{\A;\B}{P} + \frac{1}{8}}.
\end{align}
\end{proposition}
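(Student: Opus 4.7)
The plan is to invoke \cref{lm:binary-output-pg}, which writes $\pg{ns}{\X}{\A;\B}{P}$ as the maximum of $\pg{c}{\X}{\A;\B}{P}$ and an ``extra'' quantity
\[
  E \eqdef \frac{1}{k}\sum_{a,b\in\set{0,1}} S_{ab},
  \qquad
  S_{ab} \eqdef \sum_{x\in G_{ab}} P_{\XAB}(x,a,b),
\]
where $G_{ab}\eqdef\set{x\in\SX : f(x,a),g(x,b)\in[k],\ (f(x,a)-g(x,b))\bmod k = ab}$, maximised over $k\in\set{2,\dotsc,d}$ and the permutations $f(\cdot,a),g(\cdot,b)$ provided by \cref{lm:binary-output-pg}. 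Since $\pg{c}{\X}{\A;\B}{P}\le\pg{q}{\X}{\A;\B}{P}\le\pg{ns}{\X}{\A;\B}{P}$ is immediate (a classical strategy is quantum, and a quantum correlation is no-signaling), both claims reduce to upper-bounding $E$ in terms of $\pg{c}{\X}{\A;\B}{P}$.

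For $\card{\SX}=2$ only $k=2$ is admissible, so $E\le 1/k=1/2$. The constant-guess strategy in which both parties ignore their inputs and output $\arg\max_s P_\X(s)$ yields $\pg{c}{\X}{\A;\B}{P}\ge\max_s P_\X(s)\ge 1/2$. Hence $E\le\pg{c}{\X}{\A;\B}{P}$, which forces $\pg{ns}{\X}{\A;\B}{P}=\pg{c}{\X}{\A;\B}{P}$ and collapses the whole hierarchy via the sandwich above.

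For $\card{\SX}>2$ the technical heart is the single inequality $\pg{c}{\X}{\A;\B}{P}\ge\tfrac{3}{4} E$, which I prove by derandomising the no-signaling strategy. For each of the four $3$-subsets $T\subseteq\SA\times\SB$, the CHSH-type system $\alpha_a-\beta_b\equiv ab\pmod k$ restricted to $T$ has exactly $k$ solutions $(\alpha_0,\alpha_1,\beta_0,\beta_1)\in[k]^4$, parameterised by a single $u\in[k]$; a short case check (one per $T$) shows all four coordinates are bijective functions of $u$. For such a solution let $v_T(u)$ be the success probability of the classical strategy in which Alice outputs $f(\cdot,a)^{-1}(\alpha_a(u))$ and Bob outputs $g(\cdot,b)^{-1}(\beta_b(u))$. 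For every $(a,b)\in T$ and every $x\in G_{ab}$, exactly one $u\in[k]$ satisfies $f(x,a)=\alpha_a(u)$, and the CHSH relation $\alpha_a(u)-\beta_b(u)\equiv ab$ then automatically forces $g(x,b)=\beta_b(u)$; at this $u$ the summand $P_{\XAB}(x,a,b)$ enters $v_T(u)$. Since other contributions are non-negative, $\sum_{u\in[k]}v_T(u)\ge\sum_{(a,b)\in T}S_{ab}$, whence $\pg{c}{\X}{\A;\B}{P}\ge\max_u v_T(u)\ge\tfrac{1}{k}\sum_{(a,b)\in T}S_{ab}$. Maximising over $T$ (equivalently, discarding the single smallest $S_{ab}$) retains at least $\tfrac{3}{4}\sum_{a,b}S_{ab}$, giving $\pg{c}{\X}{\A;\B}{P}\ge\tfrac{3}{4} E$.

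Both halves of the proposition then follow. First, $\pg{ns}{\X}{\A;\B}{P}\le\max\set{\pg{c}{\X}{\A;\B}{P},E}\le\max\set{\pg{c}{\X}{\A;\B}{P},\tfrac{4}{3}\pg{c}{\X}{\A;\B}{P}}=\tfrac{4}{3}\pg{c}{\X}{\A;\B}{P}\le 2\pg{c}{\X}{\A;\B}{P}$. Second, for the $+\tfrac{1}{8}$ bound, if $E\le\pg{c}{\X}{\A;\B}{P}$ there is nothing to prove; otherwise $E-\pg{c}{\X}{\A;\B}{P}\le E-\tfrac{3}{4} E=\tfrac{1}{4}E\le\tfrac{1}{4k}\le\tfrac{1}{8}$. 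The main delicate point is the bijectivity-in-$u$ verification for each of the four $T$'s, which is what allows the $\sum_u$ trick to pick up every $x\in G_{ab}$ exactly once; everything else is counting.
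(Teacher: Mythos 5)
Your proof is correct, and for the case $\card{\SX}>2$ it takes a genuinely different (and in fact sharper) route than the paper. The $\card{\SX}=2$ case coincides with the paper's argument: both combine $E\le 1/k=1/2$ with $\pg{c}{\X}{\A;\B}{P}\ge\max_x P_\X(x)\ge 1/2$ to collapse the hierarchy.

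For $\card{\SX}>2$, the paper's proof of $\pg{ns}{\X}{\A;\B}{P}\le 2\pg{c}{\X}{\A;\B}{P}$ is a plain union bound: it splits the event ``$(f(A,X)-g(B,X))\bmod k=AB$'' into the two cases $i\in\set{0,1}$, then union-bounds over $j\in[k]$, obtaining $2k$ events each with probability $\le\pg{c}{\X}{\A;\B}{P}$, and divides by $k$. For the $+\tfrac{1}{8}$ bound it uses a separate argument: after relabelling so that $\P{AB=1}\le 1/4$, it bounds the $AB=1$ contribution by $1/4$ directly and the $AB=0$ contribution by $k\pg{c}{\X}{\A;\B}{P}$ via a $k$-term union bound, giving $\pg{c}{\X}{\A;\B}{P}+\tfrac{1}{4k}\le\pg{c}{\X}{\A;\B}{P}+\tfrac{1}{8}$. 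Your derandomisation via the four $3$-subsets of $\SA\times\SB$ is both cleaner and stronger: each $T$ drops one $(a,b)$ and hence one CHSH-type constraint, leaving a consistent one-parameter family of solutions over $[k]$ in which all four coordinates $\alpha_0,\alpha_1,\beta_0,\beta_1$ are bijective in $u$ (I spot-checked all four $T$'s — correct). The ``one $u$ per $(x,a,b)$'' double-counting argument then yields $\pg{c}{\X}{\A;\B}{P}\ge\tfrac{1}{k}\sum_{(a,b)\in T}S_{ab}$, and discarding the smallest $S_{ab}$ gives $\pg{c}{\X}{\A;\B}{P}\ge\tfrac{3}{4}E$. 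This immediately produces the strictly better bound $\pg{ns}{\X}{\A;\B}{P}\le\tfrac{4}{3}\pg{c}{\X}{\A;\B}{P}$, from which both stated bounds follow as corollaries ($\tfrac{4}{3}\le 2$, and $E-\pg{c}{\X}{\A;\B}{P}\le\tfrac{1}{4}E\le\tfrac{1}{4k}\le\tfrac{1}{8}$ using $E\le 1/k$). So your argument not only proves the proposition but strengthens the multiplicative constant from $2$ to $4/3$; note this is still consistent with \cref{thm:separation}, where the ratio is $5/4$.
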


\begin{proof}
To show \cref{eq:pg-bin}, WLOG we assume that $\SA = \SB = \SX = [2]$. By \cref{lm:binary-output-pg}, it suffices to show that
\begin{align}
    \pg{c}{X}{A;B}{P} \geq \max_{f, g} \frac{1}{2} \P[(X, A, B) \sim P_\XAB]{f(A, X), g(B, X)\in [2] \textnormal{ and } f(A, X) - g(B, X) \textnormal{ mod } 2 = AB}, \label{eq:pg-c-ns-bin}
\end{align}
where the maximum is taken over all functions $f, g: [2]\times [2] \to [2]$. Note first that
\begin{align}
    \max_{f, g} \frac{1}{2} \P[(X, A, B) \sim P_\XAB]{f(A, X), g(B, X)\in [2] \textnormal{ and } f(A, X) - g(B, X) \textnormal{ mod } 2 = AB} \leq \frac{1}{2}.
\end{align}
Because $\SX$ is of size two, we also have
\begin{align}
     \pg{c}{\X}{\A;\B}{P}  \geq \max_{x\in\SX} P_{\X}(x) \geq \frac{1}{2}.
\end{align}
Therefore we have \cref{eq:pg-c-ns-bin} as desired.

When $\SX = [d]$ for  $d > 2$, we fix two functions $f, g:[2]\times [d]\to [d]$ such that for all $a, b\in [2]$, $f(a, \cdot): [d] \to [d]$ and $g(b, \cdot): [d] \to [d]$ are bijections. Let $f', g':[2]\times [d] \to [d]$ be such that for all $a, b, x, x'$, we have
\begin{align}
    f(a, x) &= x' \Longleftrightarrow f'(a, x') = x,\\
    g(b, x) &= x' \Longleftrightarrow g'(b, x') = x.
\end{align}
Then,
\begin{align}
    & \P[(X, A, B) \sim P_\XAB]{f(A, X), g(B, X)\in [k] \textnormal{ and } f(A, X) - g(B, X) \textnormal{ mod } k = AB},\\
    &\leq\sum_{i=0,1}\P[(X, A, B) \sim P_\XAB]{f(A, X), g(B, X)\in [k] \textnormal{ and } f(A, X) - g(B, X) \textnormal{ mod } k = i}\\
    &\leq \sum_{i=0, 1} \sum_{j\in[k]} \P[(X, A, B) \sim P_\XAB]{f(A, X) = j, g(B, X) = (j + i) \text{ mod } k}  \\
    &\leq \sum_{i=0, 1} \sum_{j\in[k]} \P[(X, A, B) \sim P_\XAB]{f'(A, j) = X, g'(B, (j + i) \text{ mod } k) = X}\\
    &\leq 2 k \pg{c}{\X}{\A;\B}{P}.
\end{align}
Since $f, g$ are arbitrary, we conclude by \cref{lm:binary-output-pg} that $\pg{ns}{X}{A;B}{P} \leq 2\pg{c}{X}{A;B}{P}$.

Next with re-labeling $a$ and $b$ we can always assume that $\P{AB = 1} \leq 1/4$. Then
\begin{align}
    &\P[(X, A, B) \sim P_\XAB]{f(A, X), g(B, X)\in [k] \textnormal{ and } f(A, X) - g(B, X) \textnormal{ mod } k = AB}\\
    &= \sum_{i=0, 1}\P[(X, A, B) \sim P_\XAB]{f(A, X), g(B, X)\in [k] \textnormal{ and } f(A, X) - g(B, X) \textnormal{ mod } k = i \text{ and } AB = i} \P{AB = i}\\
    &\leq \P[(X, A, B) \sim P_\XAB]{f(A, X), g(B, X)\in [k] \textnormal{ and } f(A,X) - g(B, X) \textnormal{ mod } k = 0} + 1/4.
\end{align}
With the same argument as before,
\begin{align}
    \P[(X, A, B) \sim P_\XAB]{f(A, X), g(B, X)\in [k] \textnormal{ and } f(A,X) - g(B, X) \textnormal{ mod } k = 0} \leq k \pg{c}{\X}{\A;\B}{P}.
\end{align}
Applying \cref{lm:binary-output-pg} again,
\begin{align}
    \pg{ns}{\X}{\A;\B}{P} \leq \sup_{k\geq 2} \pg{c}{\X}{\A;\B}{P} + \frac{1}{4k} \leq \pg{c}{\X}{\A;\B}{P} + \frac{1}{8} ,
\end{align}
as desired.
\end{proof}

\section{Multipartite LSSD is NP-hard} \label{sec:hardness}

In this section we consider the multipartite LSSD problem.
We show in \cref{cor:3-partite-np-hard} that finding an optimal strategy is NP-hard already for three parties with classical inputs.
All games considered in this section are based on probability distributions that corresponds to a uniform distribution over edges of a hypergraph.

\newcommand{\calA}{{\mathcal{A}}}
\newcommand{\calB}{{\mathcal{B}}}
\newcommand{\calG}{{{G}}}
\newcommand{\calV}{{\mathcal{V}}}
\newcommand{\calE}{{\mathcal{E}}}
\newcommand{\calM}{{\mathcal{M}}}

\subsection{Hypergraphs and (partial) matchings} \label{sec:hypergraphs}

A \emph{hypergraph} $\calG$ is a pair $(\SV, \SE)$ where $\SV$ is a set of vertices and $\SE$ is a set of hyperedges, which are non-empty subsets of $\SV$. A \emph{matching} of a hypergraph $\calG = (\SV, \SE) $ is a subset $\SM\subset \SE$ of mutually disjoint hyperedges. We denote by $\nu(\calG)$ the maximum cardinality of a matching of $\calG$. A \emph{fractional matching} of a hypergraph $\calG = (\SV, \SE)$ is a function $g: \SE \to [0, 1]$ such that $\sum_{e\in \SE: v\in e} g(e) \leq 1$ for all $v\in \SV$. We denote by $\nu_f(\calG)$ the maximum of $\sum_{e\in\SE} g(e)$ for all fractional matchings $g$. For any matching $\SM$, $g: e\mapsto \indic{e\in\SM}$ is a fractional matching and therefore we always have $\nu(\calG)\leq \nu_f(\calG)$.

We call a hypergraph $\calG = (\SV, \SE)$ \emph{$r$-partite} if $\SV$ can be partitioned into $r$ parts such that each hyperedge contains precisely one vertex from each part. If we denote the $r$ parts by $\SA_1, \dotsc, \SA_r$, we can characterize a hyperedge $e$ by $(a_1, \dotsc, a_r) \in \SA_1\times \cdots \times \SA_r$ where $a_i$ is the unique vertex in $e\cap \SA_i$. We can thus represent an $r$-partite hypergraph by $(\SA_1, \dotsc, \SA_r, \widetilde{\SE})$ where $\widetilde{\SE} \subset \SA_1\times \cdots \times \SA_r$.

\subsection{Hypergraph games}

For each hpyergraph, we can introduce a probability distribution and a corresponding LSSD game. Note that we need to extend all definitions from \cref{sec:prob-form} from two guessing parties to multi-party guessing, which can be done in a natural way.

\begin{definition}
Let $\calG  = (\SA_1, \dotsc, \SA_r, \SE)$ be an $r$-partite hypergraph. We define a  probability distribution over $\SE \times \SA_1 \times \cdots \times \SA_r$ as
\begin{align}
    P^{\calG}_{\E\A_1\cdots \A_r}(e, a_1, \dotsc, a_r) \eqdef \frac{1}{\card{\SE}} \indic{e = (a_1, \dotsc, a_r)}.
\end{align}
In other words, the random variable $E$ is a uniformly chosen hyperedge of $\calG$ and $A_i$ is the vertex of $E$ in $\SA_i$.
\end{definition}

Our main result of this section relates the optimal guessing probability of the game associated to a hypergraph to its maximum matching.

\begin{theorem}
\label{th:lssd-hypergraph}
For any $r$-partite hypergraph $\calG = (\SA_1, \dotsc, \SA_r, \SE)$,
\begin{align}
    \pg{c }{\E}{\A_1;\dots;\A_r}{P^{\calG}} &= \frac{\nu(\calG)}{\card{\SE}}, \\
    \pg{ns}{\E}{\A_1;\dots;\A_r}{P^{\calG}} &\leq \frac{\nu_f(\calG)}{\card{\SE}}. \label{eq:graph-ns-bound}
\end{align}
\end{theorem}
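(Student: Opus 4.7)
The plan is to treat the two claims separately, since the classical equality has a clean combinatorial identification, while the no-signaling bound requires exploiting the no-signaling constraints to produce a fractional matching.

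\textbf{Classical value.} For the multiparty analog of the characterization used in \cref{sec:prob-form}, any classical strategy is a tuple of functions $f_i\colon \SA_i \to \SE$ (shared randomness is useless for the same reason as in the bipartite case). The winning probability under $P^{\calG}$ equals $\frac{1}{\card{\SE}}$ times the number of hyperedges $e=(a_1,\dotsc,a_r)\in\SE$ with $f_1(a_1)=\dotsb=f_r(a_r)=e$. Let $\SM_{f}$ denote the set of such ``correctly guessed'' edges. The key observation is that $\SM_f$ is always a matching: if $e, e' \in \SM_f$ shared a vertex $v\in \SA_i$, then $f_i(v)$ would equal both $e$ and $e'$. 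Hence $\card{\SM_f}\leq \nu(\calG)$. Conversely, given any matching $\SM$, define $f_i(a_i)=e$ whenever $a_i$ lies in the (unique) edge $e\in \SM$ containing it, and set $f_i$ arbitrarily elsewhere; then $\SM_f\supseteq \SM$. Taking the maximum over matchings yields $\pg{c}{\E}{\A_1;\dots;\A_r}{P^{\calG}}=\nu(\calG)/\card{\SE}$.

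\textbf{No-signaling bound.} Fix any no-signaling conditional distribution $Q_{\X_1\cdots\X_r\mid \A_1\cdots\A_r}$, where the no-signaling conditions are the natural generalization of \cref{eq:ns1,eq:ns2} to $r$ parties. Define a candidate weight function $g\colon \SE\to[0,1]$ by
\begin{align}
  g(e) \eqdef Q_{\X_1\cdots\X_r\mid \A_1\cdots\A_r}\of[\big]{e,\dotsc,e \mid a_1,\dotsc,a_r}, \qquad e=(a_1,\dotsc,a_r)\in \SE.
\end{align}
By \cref{eq:pns def} suitably generalized, the winning probability of this strategy equals $\frac{1}{\card{\SE}}\sum_{e\in\SE}g(e)$. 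It therefore suffices to show that $g$ is a fractional matching, which would give the bound $\frac{1}{\card{\SE}}\sum_{e}g(e)\leq \nu_f(\calG)/\card{\SE}$ and hence \cref{eq:graph-ns-bound}.

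\textbf{Verifying the fractional matching property.} Fix $i\in\{1,\dotsc,r\}$ and a vertex $v\in \SA_i$. I need $\sum_{e\in\SE:\,v\in e} g(e)\leq 1$. For any edge $e=(a_1,\dotsc,a_r)$ with $a_i=v$, marginalizing over $\X_j$ for $j\neq i$ gives $g(e)\leq Q_{\X_i\mid \A_1\cdots\A_r}(e\mid a_1,\dotsc,a_r)$, which by the no-signaling constraints depends only on the input $v$ to party $i$; call this common value $Q_{\X_i\mid \A_i}(e\mid v)$. Summing over all edges $e$ with $a_i=v$,
\begin{align}
  \sum_{e\in\SE:\,v\in e} g(e) \;\leq\; \sum_{e\in\SE:\,v\in e} Q_{\X_i\mid \A_i}(e\mid v) \;\leq\; 1,
\end{align}
where the last inequality uses that the edges $e$ with $v\in e$ are distinct elements of $\SE$ (the alphabet of $\X_i$), so the corresponding probabilities are disjoint terms of the probability distribution $Q_{\X_i\mid \A_i}(\cdot\mid v)$. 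Thus $g$ is a fractional matching, completing the proof. The only subtlety worth highlighting is that the marginalization step crucially uses no-signaling to strip the dependence on the other parties' inputs; without it, different edges through $v$ would give marginals conditioned on different inputs and could not be combined into a single sub-probability vector.
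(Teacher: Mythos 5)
Your proof is correct and follows essentially the same route as the paper: classical strategies are identified with matchings by observing that the set of correctly-guessed edges is a matching (and conversely any matching induces a strategy), and for the no-signaling bound the diagonal $g(e)=Q(e,\dotsc,e\mid a_1,\dotsc,a_r)$ is shown to be a fractional matching via exactly the same marginalization-plus-no-signaling argument. The paper's exposition of the "$\geq$" direction has a small typographical slip (an inequality pointing the wrong way), which your write-up avoids; otherwise the two proofs are the same.
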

\begin{proof}
Consider a matching $\SM$ of $\calG$. For a fixed $1\leq i \leq r$ we define $h_i:\SA_i \to \SE$ as follows. Given $a \in \SA_i$, there is at most one $e = (a_1, \dotsc, a_r) \in \SM$ such that $a_i = a$. We set $f_i(a) = e$ if there is such hyperedge $e$ and set $f_i(a)$ to an arbitrary hyperedge otherwise. The probability of winning for this strategy is
\begin{align}
    &\sum_{e, a_1, \dotsc, a_r}P^{\calG}_{\E\A_1\cdots \A_r}(e, a_1, \dotsc, a_r)  \indic{h_1(a_1) = \cdots = h_r(a_r) =  e}\\
    &= \frac{1}{\card{\SE}}\sum_{e, a_1, \dotsc, a_r} \indic{e= (a_1, \dotsc, a_r))}\indic{h_1(a_1) = \cdots = h_r(a_r) =  e}\\
    &= \frac{1}{\card{\SE}}\sum_{e =( a_1, \dotsc, a_r) \in \SE} \indic{h_1(a_1) = \cdots = h_r(a_r) =  e}\\
    &\leq \frac{1}{\card{\SE}}\sum_{e =( a_1, \dotsc, a_r) \in \SM} \indic{h_1(a_1) = \cdots = h_r(a_r) =  e}\\
    &= \frac{\card{\SM}}{\card{\SE}},
\end{align}
which implies that $\pg{c}{\E}{\A_1;\dots;\A_r}{P^{\calG}} \geq \frac{\nu(\calG)}{\card{\SE}}$.

To show the other direction, consider an arbitrary classical strategy described by functions $h_1, \dotsc, h_r$. Define the subset
\begin{align}
    \SM := \set{e = (a_1, \dotsc, a_r) \in \SE: h_1(a_1) = \cdots = h_r(a_r) = e}.
\end{align}
To show that $\SM$ is a matching, let $e = (a_1, \dotsc, a_r)$ and $e'= (a_1', \dotsc, a_r')$ be two distinct hyperedges in $\SM$. Also suppose that $a_i = a_i'$ for some $i$. From the definition of $\SM$, we have $e= h_i(a_i) = h_i(a_i') = e'$ which contradicts the distinctness of $e$ and $e'$. Therefore, $e$ and $e'$ differ in all vertices and $\SM$ is a matching. Next, note that
\begin{align}
    &\sum_{e, a_1, \dotsc, a_r}P^{\calG}_{\E\A_1\cdots \A_r}(e, a_1, \dotsc, a_r)  \indic{h_1(a_1) = \cdots = h_r(a_r) =  e}\\
    &= \frac{1}{\card{\SE}}\sum_{e, a_1, \dotsc, a_r} \indic{e= (a_1, \dotsc, a_r)}\indic{h_1(a_1) = \cdots = h_r(a_r) =  e}\\
    &= \frac{1}{\card{\SE}}\sum_{e =( a_1, \dotsc, a_r) \in \SE} \indic{h_1(a_1) = \cdots = h_r(a_r) =  e}\\
    &= \frac{\card{\SM}}{\card{\SE}}.
\end{align}
Therefore, $\pg{c}{\E}{\A_1;\dots;\A_r}{P^{\calG}} \leq \frac{\nu(\calG)}{\card{\SE}}$.

We now prove \cref{eq:graph-ns-bound}. Let $Q_{\E_1\cdots \E_r|\A_1\cdots \A_r}$ be a no-signaling strategy. For $e = (a_1, \dotsc, a_r) \in \SE$, we define
\begin{align}
g(e) \eqdef Q_{\E_1\cdots \E_r|\A_1\cdots \A_r}\pr{e, \dotsc, e|a_1, \dotsc, a_r}.
\end{align}
We have $g(e) \in [0, 1]$ and for any $a \in \SA_i$
\begin{align}
    \sum_{e = (a_1, \dotsc, a_r) \in \SE: a_i= a} g(e)
    &=  \sum_{e = (a_1, \dotsc, a_r) \in \SE: a_i= a} Q_{\E_1\cdots \E_r|\A_1\cdots \A_r}\pr{e, \dotsc, e|a_1, \dotsc, a_r}\\
    &\leq \sum_{e = (a_1, \dotsc, a_r) \in \SE:a_i = a} \sum_{e_1, \dotsc, e_{i-1}, e_{i+1}, \dotsc, e_{r} } Q_{\E_1\cdots \E_r|\A_1\cdots \A_r}\pr{e_1, \dotsc, e_{i-1}, e, e_{i+1}, \dotsc,  e|a_1, \dotsc, a_r}\\
    &=\sum_{e = (a_1, \dotsc, a_r) \in \SE:a_i = a} Q_{\E_i|\A_1, \dotsc, \A_r}(e|a_1, \dotsc, a_r)\\
    &\stackrel{(a)}{=}\sum_{e = (a_1, \dotsc, a_r) \in \SE:a_i = a} Q_{\E_i|\A_i}(e|a)\\
    &\stackrel{(b)}{\leq} 1,
\end{align}
where $(a)$ follows since $Q_{\E_1\cdots \E_r|\A_1\cdots \A_r}$ is non-signaling and $(b)$ follows since $Q_{\E_i|\A_i}$ is a conditional probability distribution. Therefore, $g$ is a fractional matching. We can upper-bound the probability of winning for the non-signaling strategy $Q_{\E_1\cdots \E_r|\A_1\cdots \A_r}$ as
\begin{align}
    &\sum_{e, a_1, \dotsc, a_r} P^{\calG}_{\E\A_1\cdots \A_r}(e, a_1, \dotsc, a_r) Q_{\E_1\cdots\E_r|\A_1\cdots\A_r}(e, \dotsc, e|a_1, \dotsc, a_r)\\
    &= \frac{1}{\card{\SE}} \sum_{e = (a_1, \dotsc, a_r)} Q_{\E_1\cdots \E_r|\A_1\cdots \A_r}(e, \dotsc, e|a_1, \dotsc, a_r)\\
    &= \frac{1}{\card{\SE}} \sum_{e = (a_1, \dotsc, a_r)} g(e)\\
    &\leq \frac{\nu_f(\calG)}{\card{\SE}},
\end{align}
which completes the proof of \cref{eq:graph-ns-bound}.
\end{proof}

\begin{corollary}
\label{cor:3-partite-np-hard}
For a $3$-partite hypergraph $\calG$, finding $\pg{c}{\E}{\A_1;\A_2;\A_3}{P^{\calG}}$ is an NP-hard problem.
\end{corollary}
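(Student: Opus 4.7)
The plan is to reduce the 3-dimensional matching problem (3DM) to computing $\pg{c}{\E}{\A_1;\A_2;\A_3}{P^{\calG}}$, using \cref{th:lssd-hypergraph} as the key bridge. Recall that 3DM --- given a 3-partite hypergraph $\calG = (\SA_1, \SA_2, \SA_3, \SE)$ and an integer $k$, decide whether $\nu(\calG) \geq k$ --- is one of Karp's original NP-complete problems, so computing $\nu(\calG)$ is NP-hard.

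First, I would observe that the encoding size of $\calG$ polynomially determines $|\SE|$, which is computable in linear time. By \cref{th:lssd-hypergraph} (instantiated with $r=3$), we have the exact equality
\begin{equation}
\pg{c}{\E}{\A_1;\A_2;\A_3}{P^{\calG}} = \frac{\nu(\calG)}{\card{\SE}}.
\end{equation}
Thus any algorithm that, on input $\calG$, outputs $\pg{c}{\E}{\A_1;\A_2;\A_3}{P^{\calG}}$ can be converted, in polynomial time, into one that outputs $\nu(\calG) = \card{\SE} \cdot \pg{c}{\E}{\A_1;\A_2;\A_3}{P^{\calG}}$.

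Next I would formalize the reduction: given an instance $(\calG, k)$ of 3DM, construct the (same) hypergraph $\calG$, compute $p := \pg{c}{\E}{\A_1;\A_2;\A_3}{P^{\calG}}$ using the assumed oracle, and return ``yes'' iff $\card{\SE} \cdot p \geq k$. This is a polynomial-time many-one reduction from an NP-hard decision problem to the computational problem in the statement, which establishes NP-hardness.

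The only substantive point to verify is that \cref{th:lssd-hypergraph} applies verbatim to the 3-partite case; this is immediate since the theorem is stated and proved for general $r$-partite hypergraphs, of which $r=3$ is a special case. There is no real obstacle here --- the content of the corollary is entirely in the reduction from 3DM through the identity in \cref{th:lssd-hypergraph} --- but one should remark briefly that the construction of $P^{\calG}$ from $\calG$ is evidently polynomial-time, so the reduction has the required efficiency.
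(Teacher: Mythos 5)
Your proposal is correct and follows essentially the same route as the paper: both invoke \cref{th:lssd-hypergraph} to identify $\pg{c}{\E}{\A_1;\A_2;\A_3}{P^{\calG}}$ with $\nu(\calG)/\card{\SE}$ and then appeal to the NP-hardness of $3$-dimensional matching from Karp's list. You simply spell out the polynomial-time reduction a bit more explicitly than the paper does, which is harmless.
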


\begin{proof}
According to \cref{th:lssd-hypergraph}, finding $\pg{c}{\E}{\A_1;\A_2;\A_3}{P^{\calG}}$ is equivalent to finding the size of the maximum matching in $\calG$, which is NP-hard \cite{Karp1972}.
\end{proof}

\begin{corollary}
Given the assumption $\textnormal{P} \neq \textnormal{NP}$, there exists a $3$-partite hypergraph $\calG$ such that
\begin{align}
    \pg{c}{\E}{\A_1;\A_2;\A_3}{P^{\calG}} < \pg{ns}{\E}{\A_1;\A_2;\A_3}{P^{\calG}}.
\end{align}
\end{corollary}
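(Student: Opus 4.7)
The plan is to prove the corollary by contradiction, leveraging the complexity gap between the two quantities. Suppose towards contradiction that for every $3$-partite hypergraph $\calG$ one has $\pg{c}{\E}{\A_1;\A_2;\A_3}{P^{\calG}} = \pg{ns}{\E}{\A_1;\A_2;\A_3}{P^{\calG}}$. Then computing the classical value reduces to computing the no-signaling value, and I claim the latter can be done in polynomial time.

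To see this, extend the definition in \cref{eq:pns def} to three parties in the natural way: $\pg{ns}{\E}{\A_1;\A_2;\A_3}{P^{\calG}}$ is the supremum of $\sum_{e,a_1,a_2,a_3} P^{\calG}(e,a_1,a_2,a_3)\, Q_{\E_1\E_2\E_3|\A_1\A_2\A_3}(e,e,e\,|\,a_1,a_2,a_3)$ over conditional distributions $Q$ satisfying the analogues of \cref{eq:ns1,eq:ns2} for all pairs of parties. This is a linear program: the objective is linear in the entries of $Q$, and the constraints (non-negativity, normalization, and the no-signaling marginals) are linear. The number of variables is at most $|\SE|^3 \cdot |\SA_1||\SA_2||\SA_3|$ and the number of constraints is also polynomial in these parameters, hence polynomial in the bit-size of the hypergraph $\calG$. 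Linear programs of polynomial size are solvable in polynomial time, e.g.\ by the ellipsoid or interior-point method.

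Combined with the assumed equality, this would give a polynomial-time algorithm that, on input $\calG$, outputs $\pg{c}{\E}{\A_1;\A_2;\A_3}{P^{\calG}}$. By \cref{cor:3-partite-np-hard} this quantity determines the maximum matching size $\nu(\calG)$ via $\nu(\calG) = |\SE| \cdot \pg{c}{\E}{\A_1;\A_2;\A_3}{P^{\calG}}$, and the decision version of $3$-dimensional matching is NP-hard \cite{Karp1972}. So we would conclude $\textnormal{P} = \textnormal{NP}$, contradicting our hypothesis.

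The only real subtlety is the polynomial-time solvability claim for the no-signaling LP; the rest is a direct contrapositive of \cref{cor:3-partite-np-hard}. I anticipate no other obstacle, since \cref{th:lssd-hypergraph} already tightly identifies $\pg{c}$ with matching size and gives the upper bound $\pg{ns}{\E}{\A_1;\A_2;\A_3}{P^{\calG}} \leq \nu_f(\calG)/|\SE|$ in terms of the fractional matching number, confirming that whenever $\nu(\calG) < \nu_f(\calG)$ the values can be strictly separated; the complexity-theoretic argument just guarantees that such a $\calG$ must exist.
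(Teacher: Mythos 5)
Your proposal is correct and follows essentially the same approach as the paper: assume equality for all $3$-partite hypergraphs, observe that the no-signaling value is computable in polynomial time as a linear program of polynomial size, and conclude via \cref{cor:3-partite-np-hard} that this would imply $\textnormal{P} = \textnormal{NP}$. The extra detail you give on the LP's variable and constraint counts is a reasonable elaboration of the paper's one-line justification.
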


\begin{proof}
For the sake of contradiction, suppose that for all $3$-partite hypergraphs $\calG$,
\begin{align}
\label{eq:c-equal-ns}
    \pg{c}{\E}{\A_1;\A_2;\A_3}{P^{\calG}} = \pg{ns}{\E}{\A_1;\A_2;\A_3}{P^{\calG}}.
\end{align}
Since $\pg{ns}{\E}{\A_1;\A_2;\A_3}{P^{\calG}}$ can be formulated as a linear program of size polynomial in $\card{\SA_1}\card{\SA_2}\card{\SA_3}$, we can find $\pg{ns}{\E}{\A_1;\A_2;\A_3}{P^{\calG}}$ in polynomial time. Therefore, by our assumption in \cref{eq:c-equal-ns}, we can also find $\pg{c}{\E}{\A_1;\A_2;\A_3}{P^{\calG}}$ in polynomial time, which is in contradiction with \cref{cor:3-partite-np-hard} and the assumption $\textnormal{P} \neq \textnormal{NP}$.
\end{proof}

\begin{corollary}
\label{cor:r-partite-bound}
For any $r$-partite hypergraph $\calG = (\SA_1, \dotsc, \SA_r, \SE)$,
\begin{align}
    \pg{ns}{\E}{\A_1;\dots;\A_r}{P^{\calG}} \leq (r-1) \pg{c}{\E}{\A_1;\dots;\A_r}{P^{\calG}}.
\end{align}
\end{corollary}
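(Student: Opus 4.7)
The plan is to combine \cref{th:lssd-hypergraph} with the purely combinatorial inequality $\nu_f(\calG) \leq (r-1)\,\nu(\calG)$, valid for every $r$-partite hypergraph $\calG$. From \cref{th:lssd-hypergraph} we have
\begin{align*}
\pg{ns}{\E}{\A_1;\dots;\A_r}{P^{\calG}} \leq \frac{\nu_f(\calG)}{\card{\SE}}, \qquad \pg{c}{\E}{\A_1;\dots;\A_r}{P^{\calG}} = \frac{\nu(\calG)}{\card{\SE}},
\end{align*}
so substituting these into the desired inequality immediately reduces the corollary to the combinatorial statement $\nu_f(\calG) \leq (r-1)\,\nu(\calG)$.

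For $r = 2$ this is K\"onig's theorem, $\nu_f = \nu$ for bipartite graphs. For $r = 3$ it follows from Aharoni's theorem $\tau \leq 2\nu$ via the standard chain $\nu_f = \tau^* \leq \tau$, where $\tau^*$ denotes the minimum fractional vertex cover. For general $r$ the statement is the fractional analogue of Ryser's conjecture, which can be proved by LP duality ($\nu_f = \tau^*$) together with the construction of a fractional vertex cover of total weight at most $(r-1)\,\nu(\calG)$ built from any maximum matching $\SM$ of size $\nu(\calG)$.

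The main technical obstacle is precisely this construction of the fractional cover in the general case. A naive attempt, such as placing unit weight on all vertices touched by $\SM$ that lie in $r-1$ chosen parts, fails whenever an edge $e \in \SE$ meets $\bigcup_{f \in \SM} f$ only in the omitted part. A successful construction must distribute weights more carefully across all $r$ parts; one natural route is a Hall/flow-type feasibility argument in which, given any fractional matching $g$, one assigns the $g$-weight of each edge $e \in \SE$ to the matching edges $f \in \SM$ it meets (every $e$ meets at least one such $f$ by maximality of $\SM$) in a way that guarantees no single matching edge accumulates total assigned weight exceeding $r-1$. Summing over $\SM$ then yields $\sum_e g(e) \leq (r-1)\,\nu(\calG)$ as required.
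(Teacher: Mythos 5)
Your reduction to the combinatorial inequality $\nu_f(\calG) \leq (r-1)\,\nu(\calG)$ via \cref{th:lssd-hypergraph} is correct, and it is exactly the reduction the paper uses. The difference is in how the combinatorial inequality is handled: the paper simply cites F\"uredi~\cite{Furedi_1981}, who proved $\nu_f \leq (r-1)\nu$ for $r$-partite hypergraphs, whereas you attempt to re-derive it and, by your own admission, do not finish. This is a genuine gap, not merely a stylistic one, because the sketched repair does not work. Specifically, the Hall/flow feasibility argument you describe is circular: the assignment of each edge's $g$-weight to matching edges it meets, with per-edge capacity $r-1$, is feasible if and only if for every $T \subseteq \SM$ the total $g$-weight of edges whose matching neighbors all lie in $T$ is at most $(r-1)\card{T}$. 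Taking $T = \SM$ and using maximality of $\SM$ (every edge meets some matching edge), that Hall condition reads $\sum_e g(e) \leq (r-1)\card{\SM}$, which is precisely the inequality you are trying to prove. So feasibility is not something you can establish independently and then invoke; you would have to prove the target bound first.

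Your observations for $r=2$ (K\"onig, giving $\nu_f = \nu$) and $r=3$ (Aharoni's $\tau \leq 2\nu$ together with $\nu_f = \tau^* \leq \tau$) are correct but invoke heavyweight theorems that go beyond what is needed, and they do not extend to general $r$. The clean route, which the paper takes, is to cite F\"uredi's result directly; alternatively, one would need to reproduce F\"uredi's actual argument (which analyzes extreme points of the fractional matching polytope for $r$-partite hypergraphs), not the flow-based distribution you propose.
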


\begin{proof}
For any $r$-partite hypergraph $\calG$, we have $\nu_f(\calG) \leq (r-1) \nu(\calG)$ \cite{Furedi_1981}. Combining this with \cref{th:lssd-hypergraph} completes the proof.
\end{proof}

\begin{corollary}
For a bipartite graph $\calG$,
\begin{align}
\label{eq:bipartite-lssd}
    \pg{c}{\E}{\A_1;\A_2}{P^{\calG}} = \pg{q}{\E}{\A_1;\A_2}{P^{\calG}} = \pg{ns}{\E}{\A_1;\A_2}{P^{\calG}}.
\end{align}
\end{corollary}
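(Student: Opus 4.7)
The plan is to chain the bounds from \cref{th:lssd-hypergraph} with the classical fact that maximum matching equals maximum fractional matching in bipartite graphs. \cref{th:lssd-hypergraph} gives us $\pg{c}{\E}{\A_1;\A_2}{P^{\calG}} = \nu(\calG)/\card{\SE}$ and the upper bound $\pg{ns}{\E}{\A_1;\A_2}{P^{\calG}} \leq \nu_f(\calG)/\card{\SE}$, so the key is to squeeze the no-signaling value between these two and then insert $\pg{q}{\E}{\A_1;\A_2}{P^{\calG}}$ in the middle via the trivial inclusion of strategy classes.

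First I would note the generic chain $\pg{c}{\E}{\A_1;\A_2}{P^{\calG}} \leq \pg{q}{\E}{\A_1;\A_2}{P^{\calG}} \leq \pg{ns}{\E}{\A_1;\A_2}{P^{\calG}}$: every classical strategy is a quantum one (take a trivial entangled state and diagonal measurements), and every quantum strategy induces a no-signaling conditional distribution (the marginal on Alice's outcome is independent of Bob's input by the tensor-product structure of the measurements, and vice versa). This takes care of both ``$\leq$'' directions.

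Next I would invoke the bipartite matching result: for any bipartite graph $\calG$, one has $\nu(\calG) = \nu_f(\calG)$. This is K\"onig's theorem combined with LP duality, or equivalently the fact that the fractional matching polytope of a bipartite graph is integral (its constraint matrix is totally unimodular). Feeding this into \cref{th:lssd-hypergraph} yields
\begin{equation*}
  \pg{ns}{\E}{\A_1;\A_2}{P^{\calG}} \leq \frac{\nu_f(\calG)}{\card{\SE}} = \frac{\nu(\calG)}{\card{\SE}} = \pg{c}{\E}{\A_1;\A_2}{P^{\calG}}.
\end{equation*}
Combined with the chain from the previous paragraph, all three quantities must coincide, giving \cref{eq:bipartite-lssd}.

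There is no real obstacle here; the only nontrivial ingredient is $\nu(\calG) = \nu_f(\calG)$ for bipartite graphs, which is a standard fact that can simply be cited. The proof should occupy only a few lines.
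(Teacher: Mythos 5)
Your proof is correct and takes essentially the same route as the paper: both combine \cref{th:lssd-hypergraph} with the trivial chain $\pg{c}{\E}{\A_1;\A_2}{P^{\calG}} \leq \pg{q}{\E}{\A_1;\A_2}{P^{\calG}} \leq \pg{ns}{\E}{\A_1;\A_2}{P^{\calG}}$ and the fact that $\nu(\calG)=\nu_f(\calG)$ for bipartite graphs. The only cosmetic difference is that the paper obtains the matching identity by specializing \cref{cor:r-partite-bound} (F\"uredi's bound $\nu_f(\calG)\le(r-1)\nu(\calG)$) to $r=2$, whereas you cite the classical K\"onig/LP-duality fact directly.
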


\begin{proof}
Applying \cref{cor:r-partite-bound} when $r=2$, we have $\pg{ns}{\E}{\A_1;\A_2}{P^{\calG}} \leq \pg{c}{\E}{\A_1;\A_2}{P^{\calG}}  $. On the other hand, $\pg{c}{\E}{\A_1;\A_2}{P^{\calG}} \leq \pg{q}{\E}{\A_1;\A_2}{P^{\calG}} \leq \pg{ns}{\E}{\A_1;\A_2}{P^{\calG}}  $ by definition. Therefore, \cref{eq:bipartite-lssd} holds.
\end{proof}

\section*{Acknowledgements}
We would like to thank Laura Mančinska and Eric Chitambar for useful discussions and for sharing a draft of~\cite{CM21} with us.
CM was supported by an NWO Veni grant (Project No.~VI.Veni.192.159).
MO was supported by an NWO Vidi grant (Project No.~VI.Vidi.192.109).
CS and MT were supported by an NWO Vidi grant (Project No.~639.022.519).

\bibliographystyle{alphaurl}
\newcommand{\etalchar}[1]{$^{#1}$}

\appendix

\section{Proof of \cref{lm:binary-output-pg}}
\label{sec:pf-binary-output-pg}

\pguess*

\begin{proof}
In the classical case, it is enough to consider only deterministic strategies.
They can be described by functions $f: \SA \to \SX$ and $g: \SB \to \SX$ that locally map Alice and Bob's inputs to outputs.
Their success probability is given by
\begin{align}
  \pg{c}{\X}{\A;\B}{P}
  &= \sum_{\substack{x\in\SX\\a\in\SA,b\in\SB}} P_{\XAB}(x,a,b) \indic{f(a) = x} \indic{g(b) = x} \\
  &= \sum_{a,b} P_{\XAB}(f(a),a,b) \indic{f(a) = g(b)}.
  \label{eq:fa}
\end{align}
There are two possibilities: Alice can either ignore her input and always produce a fixed output, or she can take her input into account.

In the first case, $f(0) = f(1) =: s$ and their success probability is
\begin{equation}
  \sum_{a,b} P_{\XAB}(s,a,b) \indic{s = g(b)}.
\end{equation}
It is maximized when Bob also ignores his input and outputs the same fixed value $s$ as Alice, i.e., $g(0) = g(1) = s$.
This results in success probability
\begin{equation}
  \sum_{a,b} P_{\XAB}(s,a,b) = P_\X(s)
  \label{eq:Ps}
\end{equation}
where $s \in \set{0,1}$.
This accounts for the first term in \cref{eq:cprob}.

If Alice does \emph{not} ignore her input then $f(0) \neq f(1)$.
We can assume that neither does Bob, i.e., $g(0) \neq g(1)$.
Indeed, if Bob were to ignore his input, Alice could improve her strategy by outputting the same value as Bob and we would again arrive at \cref{eq:Ps}.
To maximize the success probability in \cref{eq:fa}, the strategies $f$ and $g$ should be coordinated so that $\set{f(0),f(1)} = \set{g(0),g(1)}$ as sets.
In other words, either $f(0) = g(0)$ and $f(1) = g(1)$, or $f(0) = g(1)$ and $f(1) = g(0)$.
These two cases result in success probabilities
\begin{align}
  P_\XAB(f(0),0,0) + P_\XAB(f(1),1,1), &&
  P_\XAB(f(0),0,1) + P_\XAB(f(1),1,0).
\end{align}
Letting $\set{s,t} := \set{f(0),f(1)} \subseteq \SX$ we recover the last two terms in \cref{eq:cprob}.

We now prove \cref{eq:nsprob}. Recall from \cref{eq:pns def} that
\begin{align}
  \pg{ns}{\X}{\A;\B}{P} \eqdef
  \sup_{Q_{\X_A\X_B|\AB}}
  \sum_{\substack{x\in\SX\\a\in\SA,b\in\SB}}
  P_{\XAB}(x,a,b)
  Q_{\X_A\X_B|\AB}(x,x|a,b).
\end{align}
where $Q_{\X_A\X_B|\AB}$ is a conditional probability distribution satisfying the no-signaling conditions in \cref{eq:ns1,eq:ns2}.
Since the objective function and all constraints are linear, an optimal $Q_{\X_A\X_B|\AB}$ is an extreme point of the set of all non-signaling conditional probability distributions.
A \emph{local} extreme point can achieve success probability at most $\pg{c}{\X}{\A;\B}{P}$, corresponding to the first term in \cref{eq:nsprob}.

According to \cite[Theorem~1]{ExtremalPoints}, any \emph{non-local} extreme point of the two-party non-signaling polytope where each party has two inputs and $d$ outputs, is given by $Q^k_{\X_A\X_B|\AB}$ in \cref{eq:Qk}, for some $k \in \set{2, \dotsc, d}$, up to reversible local relabeling.
Intuitively, \cref{eq:Qk} says that we choose $x_B \in [k]$ uniformly at random and set
\begin{equation}
  x_A =
  \begin{cases}
    x_B + 1 \pmod k & \text{if $(a,b) = (1,1)$}, \\
    x_B & \text{otherwise}.
  \end{cases}
  \label{eq:differ on 11}
\end{equation}
A reversible local relabeling means that each party can locally permute their input as well as output values, and the output permutation may depend on the local input value.
The extreme distributions in \cref{eq:Qk} have the property that any local permutation of input values can be achieved by instead locally permuting outputs conditioned on inputs.
For example, the input permutation $a \mapsto 1 - a$ for Alice can be achieved by first negating both variables (i.e., $x_A \mapsto -x_A$ and $x_B \mapsto -x_B$) and then Bob increasing his output by one (i.e., $x_B \mapsto x_B + 1$) whenever $b = 1$.
Indeed, this will cause $x_A = x_B + 1$ whenever $(a,b) = (0,1)$ and $x_A = x_B$ otherwise, see \cref{eq:differ on 11}.

Since we only need to take into account local output permutations that may depend on local inputs, any non-local extreme point of the non-signaling polytope is of the form
\begin{equation}
  \widetilde{Q}^k_{\X_A\X_B|\AB}\of[\big]{x_A,x_B|a,b}
  = Q^k_{\X_A\X_B|\AB}\of[\big]{f(x_A,a),g(x_B,b)|a,b},
\end{equation}
where $Q^k_{\X_A\X_B|\AB}$ is given by \cref{eq:Qk} and $f: \SX \times \SA \to \SX$ and $g: \SX \times \SB \to \SX$ are functions such that $f(\cdot,a), g(\cdot,b): \SX \to \SX$ are permutations for every $a \in \SA$ and $b \in \SB$.
This establishes \cref{eq:nsprob}.
\end{proof}

\section{Constraints on optimal measurements}\label{apx:constraints}

The following proposition shows that any measurement can be replaced by a projective measurement on a larger space.

\begin{proposition}\label{prop:measurement purification}
For any an $n$-outcome measurement
$M = \set{M_1, \dotsc, M_n}$ on $\C^d$,
there is a projective measurement
$\set{\Pi_1, \dotsc, \Pi_n}$ on $\C^d \x \C^n$
and an isometry $U: \C^d \to \C^d \x \C^n$ such that,
for all $i = 1, \dotsc, n$,
\begin{equation}
  M_i = U\ct \Pi_i U.
  \label{eq:purification}
\end{equation}
\end{proposition}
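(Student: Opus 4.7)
The plan is to apply the standard Naimark dilation: we adjoin an ancilla register $\C^n$ that coherently records the outcome label, and then simply measure the label in the computational basis.

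First, I would define the isometry $U: \C^d \to \C^d \x \C^n$ by
\[
  U := \sum_{i=1}^n \sqrt{M_i} \x \ket{i},
\]
where $\sqrt{M_i}$ denotes the unique positive square root of $M_i$ (which exists because $M_i \succeq 0$) and $\set{\ket{i}}_{i=1}^n$ is an orthonormal basis of $\C^n$. The POVM completeness relation $\sum_i M_i = \one$, together with orthonormality of the $\ket{i}$, immediately gives $U\ct U = \sum_{i,j} \sqrt{M_j}\sqrt{M_i}\braket{j}{i} = \sum_i M_i = \one$, so $U$ is indeed an isometry.

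Next, I would take the projective measurement on $\C^d \x \C^n$ to consist of the ``label projectors''
\[
  \Pi_i := \one \x \kb{i}, \qquad i = 1, \dots, n,
\]
which are manifestly orthogonal projections summing to the identity on $\C^d \x \C^n$. Computing $U\ct \Pi_i U$ then collapses, via $\braket{j}{i}\braket{i}{k} = \delta_{ji}\delta_{ik}$ in the ancilla register, to a single surviving term $\sqrt{M_i} \cdot \one \cdot \sqrt{M_i} = M_i$, which is exactly \cref{eq:purification}.

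There is no real obstacle here: the result is Naimark's theorem in its simplest incarnation, and essentially all that needs checking are the existence of the positive square roots (immediate from $M_i \succeq 0$) and the contraction of the inner products in the $\C^n$ register on both sides of $\Pi_i$, which routinely picks out the $i$-th diagonal term of the double sum defining $U\ct \Pi_i U$.
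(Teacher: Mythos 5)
Your proposal is correct and matches the paper's proof essentially verbatim: both use the Naimark isometry $U = \sum_i \sqrt{M_i}\x\ket{i}$ together with the label projectors $\Pi_i = \one\x\kb{i}$, and verify \cref{eq:purification} and the isometry condition by the same inner-product contraction in the ancilla. The only cosmetic difference is that the paper checks $U\ct U = \one$ by first summing the $\Pi_i$, whereas you expand the double sum directly; these are the same computation.
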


\begin{proof}
Let $U := \sum_{i=1}^n \sqrt{M_i} \x \ket{i}$
and $\Pi_i := \one \x \proj{i}$.
Clearly, each $\Pi_i$ is a projector and $\sum_{i=1}^n \Pi_i = \one$.
\Cref{eq:purification} holds since
\begin{align}
  U\ct \Pi_i U
  &= \of[\bigg]{\sum_{j=1}^n \sqrt{M_j} \x \bra{j}} \Pi_i
     \of[\bigg]{\sum_{k=1}^n \sqrt{M_k} \x \ket{k}} \\
  &= \sum_{j,k=1}^n
     \sqrt{M_j} \one \sqrt{M_k} \x \braket{j}{i} \braket{i}{k} \\
  &= M_i.
\end{align}
Finally, $U$ is an isometry since
$ U \ct U
  = U \ct \of[\big]{\sum_{i=1}^n \Pi_i} U
  = \sum_{i=1}^n U \ct \Pi_i U
  = \sum_{i=1}^n M_i
  = \one$.
\end{proof}

Using the above result, we can show that it suffices to consider only projective measurements when determining the optimal winning probability for quantum strategies assisted by an entangled state of an arbitrarily large dimension.
Our argument is similar to \cite[Lemma~9]{MonogamyGame}.
\begin{corollary}\label{cor:projective}
If $P_\XAB$ is a probability distribution over $\SX \times \SA \times \SB$ then
\begin{align}
  \pg{q}{\X}{\A;\B}{P}
= \sup_{d \geq 1} \pg[d]{q}{\X}{\A;\B}{P}
= \sup_{d \geq 1}
  \sup_{\substack{\Pi: \SA \to \PM{\C^{\da}} \\ \Sigma: \SB \to \PM{\C^{\db}}}}
  \norm[\Big]{\sum_{\substack{x\in\SX\\a\in\SA,b\in\SB}} P_{\XAB}(x,a,b) \Pi_x(a) \x \Sigma_x(b)},
  \label{eq:projective only}
\end{align}
where the last supremum is over collections of projective measurements.
\end{corollary}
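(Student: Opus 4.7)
The first equality is by definition of $\pg{q}{\X}{\A;\B}{P}$; see \cref{eq:pq def}. The classical-input simplification \cref{eq:pq for distribution} already rewrites $\pg[d]{q}$ as an operator-norm supremum over POVM collections, so for the second equality the inequality $\geq$ is immediate (every projective measurement is a POVM). The task is therefore the reverse inclusion: from any POVM collections $M,N$ on $\C^d$, I want to construct projective measurements on a (possibly larger) space whose associated operator has norm at least $\norm{\sum_{x,a,b} P_\XAB(x,a,b) M_x(a) \x N_x(b)}$.

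My plan is a per-input Naimark dilation, absorbing the dilation ancillas into the shared entangled resource. First I set $n := \card{\SX}$ and apply \cref{prop:measurement purification} to each $a$-conditioned POVM $M(a) = \set{M_x(a)}_{x \in \SX}$ to obtain an isometry $V_a : \C^d \to \C^d \x \C^n$ satisfying $M_x(a) = V_a\ct (\one \x \proj{x}) V_a$. Next I extend $V_a$ to a unitary $U_a$ on $\C^d \x \C^n$ by writing $V_a = U_a (\one \x \ket{0})$ for a fixed reference vector $\ket{0} \in \C^n$; such an extension exists because $V_a$ has $d$-dimensional range inside a $dn$-dimensional space. I then define the projective measurement $\widetilde{\Pi}(a)$ on $\C^d \x \C^n$ by $\widetilde{\Pi}_x(a) := U_a\ct (\one \x \proj{x}) U_a$, and perform the analogous construction on Bob's side to obtain projective measurements $\widetilde{\Sigma}(b)$ from $N(b)$.

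By construction $M_x(a) = (\one \x \bra{0}) \widetilde{\Pi}_x(a) (\one \x \ket{0})$, and likewise for $N_x(b)$ and $\widetilde{\Sigma}_x(b)$. Hence for any state $\ket{\psi} \in \C^d \x \C^d$, the enlarged vector obtained by tensoring $\ket{\psi}$ with $\ket{0}$ on each party's Naimark ancilla yields exactly the same expectation value against $\sum_{x,a,b} P_\XAB(x,a,b) \widetilde{\Pi}_x(a) \x \widetilde{\Sigma}_x(b)$ as $\ket{\psi}$ does against the original POVM operator, once I reorder tensor factors so that each party's two subsystems sit together. Taking the supremum over $\ket{\psi}$ shows that the norm of the POVM operator is bounded above by that of the projective-measurement operator on the enlarged $dn$-dimensional space, and the outer supremum over $d$ then closes the equality. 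I do not expect a serious obstacle here; the substance is the Naimark dilation of \cref{prop:measurement purification}, and the remaining work is purely the tensor-factor bookkeeping needed to combine the two per-party dilations into a single enlarged shared state.
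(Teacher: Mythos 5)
Your proof is correct and follows essentially the same route as the paper: Naimark-dilate each conditional POVM via \cref{prop:measurement purification}, extend the resulting isometries to unitaries, and absorb the fixed ancilla states into an enlarged shared entangled state so that the expectation value of the dilated (now projective) operator against the enlarged state matches the original, which closes the argument after taking suprema over $d$. The only cosmetic difference is that the paper works with the principal eigenvector of the POVM operator while you argue over arbitrary states, and your choice of ancilla dimension $n = \card{\SX}$ is the one dictated by \cref{prop:measurement purification}.
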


\newcommand{\tU}{\widetilde{U}}
\newcommand{\tV}{\widetilde{V}}
\newcommand{\ts}{\widetilde{\sigma}}

\begin{proof}
The first equality in \cref{eq:projective only} is by definition, see \cref{eq:pq def}.
For the second equality, recall from \cref{eq:pq for distribution} that
\begin{equation}
  \pg[d]{q}{\X}{\A;\B}{P}
= \sup_{\substack{M: \SA \to \POVM{\C^{\da}} \\ N: \SB \to \POVM{\C^{\db}}}}
  \norm[\Big]{\sum_{\substack{x\in\SX\\a\in\SA,b\in\SB}} P_{\XAB}(x,a,b) M_x(a) \x N_x(b)}.
\end{equation}
We need to show that, at the cost of increasing the dimension $d$, the optimization here can be restricted to just projective measurements.
For convenience, let
\begin{equation}
  \Omega_\ab := \sum_{\substack{x\in\SX\\a\in\SA,b\in\SB}} P_{\XAB}(x,a,b) M_x(a) \x N_x(b).
  \label{eq:Omega}
\end{equation}
where $M_x^a$ and $N_x^b$ act on registers $\a$ and $\b$ of dimension $d$.

Let us fix a dimension $d \geq 1$ and set $\HA = \C^\SA$ and $\HB = \C^\SB$ as usual.
Using \cref{prop:measurement purification}, we can find collections of isometries $U_a: \C^d \to \C^d \x \HA$ and $V_b: \C^d \to \C^d \x \HB$ and projective measurements $\Pi(a) \in \PM{\C^d \x \HA}$ and $\Sigma(b) \in \PM{\C^d \x \HB}$ such that
\begin{align}
  M_x(a) &= U_a\ct \Pi^a_x U_a, &
  N_x(b) &= V_b\ct \Sigma^b_x V_b,
\end{align}
for all $a \in \SA$, $b \in \SB$, and $x \in \SX$. Then
\begin{equation}
  \Omega_\ab =
  \sum_{\substack{x\in\SX\\a\in\SA,b\in\SB}} P_{\XAB}(x,a,b)
  \of[\big]{U_a \x V_b}\ct
  \of[\big]{\Pi_x(a) \x \Sigma_x(b)}
  \of[\big]{U_a \x V_b}.
  \label{eq:Omega2}
\end{equation}
Let $\ket{\sigma}_\ab \in \C^d \x \C^d$ denote its principal eigenvector.

Let us fix some arbitrary states $\ket{\alpha} \in \HA$ and $\ket{\beta} \in \HB$, and arbitrarily extend the isometries $U_a$ and $V_b$ to unitaries
$\tU_a \in \U{\C^d \x \HA}$ and $\tV_b \in \U{\C^d \x \HB}$ so that
\begin{align}
  U_a &= \tU_a \of[\big]{\one_\a \x \ket{\alpha}_\A}, &
  V_b &= \tV_b \of[\big]{\one_\b \x \ket{\beta}_\B}.
  \label{eq:compatibility}
\end{align}
Furthermore, we promote $\ket{\sigma}_\ab \in \C^d \x \C^d$ to $\ket{\ts}_{\a\A,\b\B} \in (\C^d \x \HA) \x (\C^d \x \HB)$ by defining
\begin{equation}
  \ket{\ts}_{\a\A,\b\B}
 := \ket{\sigma}_\ab \x
    \ket{\alpha}_\A \x
    \ket{\beta}_\B,
\end{equation}
where the registers on the right-hand side should be rearranged accordingly.
Then
\begin{equation}
  \of[\big]{U_A \x V_b}
  \ket{\sigma}_{\ab}
  = \of[\big]{\tU_a \x \tV_b}
  \ket{\ts}_{\a\A,\b\B}
\end{equation}
because of \cref{eq:compatibility}.
Substituting this in \cref{eq:Omega2},
\begin{equation}
  \bra{\sigma} \Omega \ket{\sigma}
  = \bra{\ts}\of[\Bigg]{
      \sum_{\substack{x\in\SX\\a\in\SA,b\in\SB}} P_{\XAB}(x,a,b)
      \of[\big]{\widetilde\Pi_x(a) \x \widetilde\Sigma_x(b)}
    }
    \ket{\ts}\end{equation}
where
$\widetilde{\Pi}_x(a) := \tU_a\ct \Pi_x(a) \tU_a$ and
$\widetilde{\Sigma}_x(b) := \tV_b\ct \Sigma_x(b) \tV_b$
are projectors on $\C^d \x \HA$ and $\C^d \x \HB$.

Hence, we have promoted the original $d$-dimensional strategy to one in dimension $d \max \set{|\SA|,|\SB|}$ that uses only projective measurements and achieves the same success probability.
Since $\pg{q}{\X}{\A;\B}{P}$ in \cref{eq:projective only} is defined as a supremum over all $d \geq 1$, this increase of dimension does not matter.
Hence, we can obtain the optimal quantum value by optimizing only over projectors.
\end{proof}

Intuitively, Alice and Bob should never guess values of $x$ that cannot occur based on their local inputs.
The following result shows that optimal measurements for Alice and Bob's quantum strategies can always be assumed to have this property.

\begin{proposition}\label{prop:zero-povm}
Let $P_\XAB$ be a probability distribution on $\SX \times \SA \times \SB$ and $d \geq 1$ an integer.
The supremum in
\begin{align}
  \pg[d]{q}{\X}{\A;\B}{P}
= \sup_{\substack{M: \SA \to \POVM{\C^{\da}} \\ N: \SB \to \POVM{\C^{\db}}}}
  \norm[\bigg]{\sum_{\substack{x\in\SX\\a\in\SA,b\in\SB}} P_{\XAB}(x,a,b) M_x(a) \x N_x(b)}
  \label{eq:pqd}
\end{align}
is achieved by collections of measurements $M(a) = \set{M_x(a) : x \in \SX}$ and $N(b) = \set{N_x(b) : x \in \SX}$ on $\C^d$ with
\begin{align}
  M_x(a) &= 0 \quad \text{if } P_\XA(x,a) = 0 \text{ and } P_\A(a) > 0, \label{eq:zero-M}\\
  N_x(b) &= 0 \quad \text{if } P_\XB(x,b) = 0 \text{ and } P_\B(b) > 0. \label{eq:zero-N}
\end{align}
In particular, if the supremum can be achieved by projective measurements then it can also be achieved by projective measurements that satisfy \cref{eq:zero-M,eq:zero-N}.
\end{proposition}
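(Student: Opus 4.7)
The plan is to modify any claimed optimal POVM collections $M, N$ so that they satisfy \cref{eq:zero-M,eq:zero-N} without decreasing the operator norm of the objective matrix. The key observation is: if $P_\XA(x_0, a) = 0$ then $P_\XAB(x_0, a, b) = 0$ for every $b$, so the operator $M_{x_0}(a)$ never appears inside any summand of $\Omega := \sum_{x,a',b} P_\XAB(x, a', b)\, M_x(a') \x N_x(b)$ at indices of the form $(x_0, a, \cdot)$. It only enters the picture through the completeness constraint $\sum_x M_x(a) = \one$, and we can exploit this freedom by absorbing $M_{x_0}(a)$ into some other $M_{x_1}(a)$ with $P_\XA(x_1, a) > 0$. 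Such an $x_1$ exists because $P_\A(a) = \sum_x P_\XA(x, a) > 0$ by hypothesis.

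First I would define a modified POVM by $M'_{x_0}(a) := 0$ and $M'_{x_1}(a) := M_{x_1}(a) + M_{x_0}(a)$, leaving all other operators (and all of Bob's operators) unchanged. The normalization $\sum_x M'_x(a) = \one$ is preserved and every new operator is PSD, so $M'(a)$ is a valid POVM. Writing $\Omega'$ for the resulting objective matrix, a direct calculation that uses $P_\XAB(x_0, a, b) = 0$ for all $b$ yields
\begin{equation*}
\Omega' - \Omega \;=\; \sum_{b \in \SB} P_\XAB(x_1, a, b)\, M_{x_0}(a) \x N_{x_1}(b) \;\succeq\; 0.
\end{equation*}
Now $\Omega$ is itself a nonnegative combination of tensor products of PSD operators, hence PSD, so $\norm{\Omega} = \lambda_{\max}(\Omega)$; adding a PSD perturbation can only increase the largest eigenvalue (Weyl monotonicity), so $\norm{\Omega'} \geq \norm{\Omega}$. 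If the starting $M, N$ were optimal, this forces equality, and the new Alice POVM $M'(a)$ has one fewer violation of \cref{eq:zero-M}.

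To finish, I would note that for fixed $d$ the set of POVM tuples is compact and the objective is continuous, so the supremum in \cref{eq:pqd} is in fact attained. Starting from an optimizer, I iteratively apply the swap above to remove every violation on Alice's side (finitely many, and no step creates a new violation because $x_1$ is chosen legal), then repeat symmetrically for Bob; Alice's modifications do not touch Bob's POVMs, so the two rounds do not interfere. For the final clause about projective measurements, the nonzero elements of a projective POVM are mutually orthogonal projectors (since $\sum_x M_x = \one$ with $M_x^2 = M_x$ forces $M_x M_y = 0$ for $x \neq y$), so $M_{x_0}(a) + M_{x_1}(a)$ is again a projector and the construction preserves projectivity. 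The only point that really requires attention is the existence of the absorbing index $x_1$, which is exactly what the hypothesis $P_\A(a) > 0$ guarantees; beyond that I do not foresee any genuine obstacle.
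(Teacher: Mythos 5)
Your proof is correct and follows essentially the same route as the paper's: absorb the POVM elements $M_{x_0}(a)$ that can never score into some $M_{x_1}(a)$ with $P_\XA(x_1,a)>0$, observe that this gives $\Omega'\succeq\Omega\succeq 0$ and hence $\norm{\Omega'}\geq\norm{\Omega}$, and note that the orthogonality of projective POVM elements makes the construction preserve projectivity. The only cosmetic differences are that the paper absorbs all violating elements in one step rather than iteratively, and compares the summands directly rather than invoking Weyl monotonicity on $\Omega'-\Omega$.
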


\begin{proof}
The set of all measurements on a finite-dimensional complex Euclidean space and with a finite output set $\SX$ is compact.
Since the objective function is continuous, the maximum is achieved by some collections of measurements $M(a)$ and $N(b)$.
We can potentially improve Alice's measurement $M^a$ by absorbing those measurement operators $M_{x'}(a)$ that correspond to pairs $(x',a)$ that never occur into other operators.
More specifically, for each $a \in \SA$ with $P_\A(a) > 0$ there exists some $x_a \in \SX$ with $P_\XA(x_a,a) > 0$, so we can absorb all $M_{x'}(a)$ with $P_\XA(x',a) = 0$ into $M_{x_a}(a)$:
\begin{equation}
  \wtM_x^a \eqdef
  \begin{cases}
    0 & \text{if $P_\XA(x,a) = 0$ and $P_\A(a) > 0$}, \\
    M_x(a) + \sum_{x': P_\XA(x',a) = 0} M_{x'}(a) & \text{if $P_\A(a) > 0$ and $x = x_a$}, \\
    M_x(a) & \text{otherwise}.
  \end{cases}
\end{equation}
We can perform a similar procedure for Bob's measurements $N(b)$ to obtain $\wtN(b)$.
It is clear that all $\wtM(a)$ and $\wtN(b)$ are still measurements, and that they satisfy \cref{eq:zero-M,eq:zero-N}.
In particular, if $M(a)$ are projective measurements then so are $\wtM(a)$.
Moreover,
\begin{equation}
  P_\XAB(x,a,b) \wtM_x(a) \x \wtN_x(b) \succeq
  P_\XAB(x,a,b) M_x(a) \x N_x(b),
  \label{eq:SDP relation}
\end{equation}
for all $x,a,b$.
Indeed, if $P_\XAB(x,a,b) = 0$ then this holds trivially, and if $P_\XAB(x,a,b) > 0$ then $\wtM_x(a) \succeq M_x(a)$ and $\wtN_x(b) \succeq N_x(b)$, so $\wtM_x(a) \x \wtN_x(b) \succeq M_x(a) \x N_x(b)$.
Since \cref{eq:SDP relation} still holds when summig over all $x,a,b$,
\begin{align}
  \norm[\Big]{\sum_{\substack{x\in\SX\\a\in\SA,b\in\SB}} P_\XAB(x,a,b) \wtM_x(a) \x \wtN_x(b)} \geq \norm[\Big]{\sum_{\substack{x\in\SX\\a\in\SA,b\in\SB}} P_\XAB(x,a,b) M_x(a) \x N_x(b)},
\end{align}
as desired.
\end{proof}

\begin{lemma}\label{lem:comm-meas}
Let $P_{\XAB}$ be a joint probability distribution. We fix a quantum strategy consisting of a quantum bi-partite state $\sigma_{\ab}$ with $\Ha = \Hb = \C^d$ and collections of measurement $M:\SA \to \POVM{\C^d}$ and $N:\SB \to \POVM{\C^d}$ with output on $\SX$. Let $M$ be such that $[M(a)_x, M(a')_{x'}]=0$ for all $a, a'\in\SA$ and for all $x,x' \in \SX$. Then,
\begin{align}
    \sum_{\substack{x\in\SX\\a\in\SA,b\in\SB}} P_{\XAB}(x, a,b) \tr\sof[\Big]{\sigma_{\ab}\of[\big]{M_x(a)\otimes N_x(b)}} \leq \pg{c}{\X}{\A;\B}{P}.
\end{align}
\end{lemma}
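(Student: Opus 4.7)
The plan is to exploit the simultaneous-diagonalization consequence of the commutation hypothesis in order to reduce Alice's quantum strategy to a randomized classical one, correlated with Bob only through shared randomness coming from measuring her half of $\sigma_\ab$. First, I would invoke the spectral theorem for the commuting family $\{M_x(a)\}_{x \in \SX, a \in \SA}$ of positive semi-definite operators to obtain an orthonormal basis $\{\ket{e_i}\}_{i=1}^{d}$ of $\Ha$ in which every $M_x(a)$ is diagonal, say $M_x(a) = \sum_i m_x^a(i) \proj{e_i}$. Positivity forces $m_x^a(i) \geq 0$, and the completeness relation $\sum_x M_x(a) = \one$ forces $\sum_x m_x^a(i) = 1$ for each $i$ and $a$. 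Thus, for every $i$ and $a$, the map $q^i_A(\cdot\,|\,a) := m^a_\cdot(i)$ is a genuine conditional distribution on $\SX$, i.e., a valid randomized classical strategy for Alice.

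Next I would extract Bob's induced randomized strategy by conditioning his state on Alice's basis outcome. Define
\begin{align}
  p_i &:= \tr\sof[\big]{\bra{e_i}_\a \sigma_\ab \ket{e_i}_\a}, &
  \rho_i^\b &:= \frac{1}{p_i} \bra{e_i}_\a \sigma_\ab \ket{e_i}_\a \quad\text{(when $p_i > 0$)},
\end{align}
so that $\{p_i\}_i$ is a probability distribution, and set $q^i_B(x\,|\,b) := \tr\sof{\rho_i^\b N_x(b)}$. Normalization of $N(b)$ guarantees $\sum_x q^i_B(x\,|\,b) = 1$, so this too is a conditional distribution on $\SX$.

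The key computation is to rewrite the quantum success probability by expanding $M_x(a)$ in the eigenbasis and collecting the $\ket{e_i}$--contributions:
\begin{align}
  \sum_{x,a,b} P_\XAB(x,a,b) \tr\sof[\big]{\sigma_\ab \of{M_x(a) \x N_x(b)}}
  = \sum_i p_i \sum_{x,a,b} P_\XAB(x,a,b)\, q^i_A(x\,|\,a)\, q^i_B(x\,|\,b).
\end{align}
For each fixed $i$ with $p_i > 0$, the inner sum is precisely the winning probability of a classical LSSD strategy on $P_\XAB$ in which Alice and Bob use the independent randomized responses $q^i_A$ and $q^i_B$; by the second characterization of $\pg{c}{\X}{\A;\B}{P}$ in the Classical Strategies paragraph (which allows local randomness without loss of generality), this quantity is at most $\pg{c}{\X}{\A;\B}{P}$. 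Averaging over $i$ with weights $p_i$ preserves the bound and yields the claim.

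I do not anticipate a real obstacle: the commutation hypothesis essentially forces the reduction, and the only care needed is to restrict the sum over $i$ to indices with $p_i > 0$ (the others contribute zero to both sides). If one wanted to be slightly more conceptual, one could note that after the change of basis Alice's measurement factors as a fixed projective measurement in $\{\ket{e_i}\}$ followed by classical post-processing depending on $a$; the proof then reads as ``measure first, then play classically,'' which is exactly what makes the bound $\pg{c}{\X}{\A;\B}{P}$ tight as an upper bound.
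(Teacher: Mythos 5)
Your proof is correct and takes essentially the same route as the paper's: you make the same reduction from Alice's commuting quantum strategy to a classical one by turning her measurement into a fixed measurement followed by classical post-processing, decompose the resulting cq-state across Alice and Bob into a convex combination, and bound each component by the classical value. The paper phrases the first step informally (``Alice can jointly perform all measurements before receiving her input''), whereas you make it explicit via the spectral theorem for the commuting family $\{M_x(a)\}$, obtaining a common eigenbasis $\{\ket{e_i}\}$ and writing $M_x(a)=\sum_i m_x^a(i)\proj{e_i}$; your $p_i$ and $\rho_i^{\b}$ are exactly the separable decomposition $\sigma_{\widetilde{\X}\b}=\sum_i p_i\,\sigma^{(i)}_{\widetilde{\X}}\x\sigma^{(i)}_{\b}$ the paper invokes. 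This is a more concrete and self-contained version of the same argument rather than a genuinely different one.
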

\begin{proof}
Because  Alice's measurement operators commute, she can jointly perform all measurements for all inputs $a\in \SA$ \emph{before} receiving her input to obtain a collection of random variables $\set{X_a: a\in \SA}$ and use  $X_a$  as her output when her input is $a$. Let $\widetilde{\X}$ denote the register containing all $\set{X_a: a\in \SA}$. Equivalently, Alice and Bob can share $\sigma_{\widetilde{\X}\b}$ in the first place, which is a cq state and therefore separable. Let $\sigma_{\widetilde{\X}\b} = \sum_i p_i \sigma^{(i)}_{\widetilde{\X}} \otimes \sigma^{(i)}_{\b}$ where $\set{p_i}$ is a probability distribution. For any collection of measurements $\widetilde{M}:\SA \to \POVM{\widetilde{\HX}}$,
\begin{align}
    &\sum_{\substack{x\in\SX\\a\in \SA,b\in\SB}} P_{\XAB}(x, a,b) \tr\sof[\Big]{\sigma_{\widetilde{\X}\b}\of[\big]{\widetilde{M}_x(a)\otimes N_x(b)}} \\
    &= \sum_i p_i\sum_{\substack{x\in\SX\\a\in \SA,b\in\SB}} P_{\XAB}(x, a,b) \tr\sof[\Big]{\of[\Big]{\sigma_{\widetilde{\X}}^{(i)}\otimes \sigma_{\b}^{(i)}}\of[\big]{\widetilde{M}_x(a)\otimes N_x(b)}} \\
    &= \sum_i p_i\sum_{\substack{x\in\SX\\a\in \SA,b\in\SB}} P_{\XAB}(x, a,b) \tr\sof[\Big]{\sigma_{\widetilde{\X}}^{(i)} \widetilde{M}_x(a)} \tr\sof[\Big]{\sigma_{\b}^{(i)} N_x(b)}.
\end{align}
Therefore, for each $i$, Alice and Bob can use classical strategies $Q_{\X_A|\A}(x_a|a) \eqdef \tr\sof[\big]{\sigma_{\widetilde{\X}}^{(i)}\widetilde{M}_x(a)} $ and $Q_{\X_B|\B}(x_b|b) \eqdef \tr\sof[\big]{ \sigma_{\b}^{(i)} N_x(b)}$, respectively. Hence,
\begin{align}
    \sum_i p_i\sum_{\substack{x\in\SX\\a\in \SA,b\in\SB}} P_{\XAB}(x, a,b) \tr\sof[\Big]{\sigma_{\widetilde{\X}}^{(i)} \widetilde{M}_x(a)} \tr\sof[\Big]{\sigma_{\b}^{(i)} N_x(b)} \leq \sum_i p_i \pg{c}{\X}{\A;\B}{P} = \pg{c}{\X}{\A;\B}{P},
\end{align}
as desired.
\end{proof}

\section{SOS representation}\label{apx:sos}

\begin{lemma}\label{lem:sos}
For any $t > t_* = \frac{16 + \sqrt{13}}{45}$ and $a,b \in [-1,1]$,
the polynomial
\begin{equation}
  f(t,a,b)
  = t^4 - t^3
  + \frac{32 + (1+a)(1+b)}{100} t^2
  - \frac{16 + 3 (1+a)(1+b)}{500} t
  + \frac{(1+a)(1+b) \of[\big]{4 - (1-a)(1-b)}}{5000}
  \label{eq:f}
\end{equation}
is strictly positive.
\end{lemma}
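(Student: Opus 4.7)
The plan is to establish positivity via an explicit sum-of-squares (SOS) certificate in the preordering generated by $t-t_*$, $1-a^2$, and $1-b^2$. Concretely, I would search for polynomials $\sigma_0, \sigma_1, \sigma_2, \sigma_3 \in \mathbb{Q}(\sqrt{13})[t, a, b]$, each of them a sum of squares, satisfying the polynomial identity
\[
  f(t,a,b) = \sigma_0 + (t-t_*)\sigma_1 + (1-a^2)\sigma_2 + (1-b^2)\sigma_3.
\]
Any such decomposition proves the lemma immediately: each summand on the right is nonnegative when $t \geq t_*$ and $a,b \in [-1,1]$, and the dominant $t^4$ behaviour of $f$ forces strict positivity whenever $t > t_*$.

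Before launching the SOS search, I would simplify the problem by observing from \cref{eq:f} that $f$ depends on $(a,b)$ only through the two quantities $p := (1+a)(1+b)$ and $q := (1-a)(1-b)$, both of which lie in $[0,4]$ and satisfy $p+q \leq 4$ on $[-1,1]^2$. Rewriting $f$ as a polynomial in $(t,p,q)$ and searching for SOS multipliers against the constraints $p \geq 0$, $q \geq 0$, $p+q \leq 4$ cuts the dimensionality and keeps candidate multiplier degrees small.

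I would then solve the resulting SOS-feasibility problem via semidefinite programming (using, e.g., SOSTools or YALMIP), beginning with low multiplier degrees and increasing as needed until the SDP becomes feasible. The numerical solution would be rationalized into exact coefficients in $\mathbb{Q}(\sqrt{13})$, exploiting the fact that $t_* \in \mathbb{Q}(\sqrt{13})$, and the resulting algebraic identity verified symbolically by a computer algebra system in the style of the certificates catalogued in \cite{SDPbook}.

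The main obstacle is the \emph{tightness} of the inequality: the value $t_*$ is actually attained in the proof of \cref{thm:separation}, so $f$ genuinely vanishes at a boundary point of the feasible region. The SOS multipliers must therefore exactly capture this vanishing locus, which typically requires higher-degree squares and delicate rationalization. While Putinar's/Schmüdgen's Positivstellensatz guarantees that some certificate exists on this compact semi-algebraic set, it gives no effective degree bound, so some iteration on the degree ansatz may be required before a clean, computer-verifiable decomposition is obtained.
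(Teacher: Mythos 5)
Your plan coincides with the paper's proof: the paper exhibits exactly the Schm\"udgen-type certificate you propose, in the form $f = v^\top\bigl(Q_1 + (t-t_*)Q_2 + (1-a^2)Q_3 + (1-b^2)Q_4\bigr)v$ with $v = (1,a,b,ab,t,t^2)^\top$ and $Q_i \succeq 0$ having entries in $\mathbb{Q}(\sqrt{13})$, found numerically by SDP and then rationalized (in Mathematica, with precision carefully cranked up). The reparametrization in $p=(1+a)(1+b)$, $q=(1-a)(1-b)$ is a sensible preprocessing idea, though the paper works directly with $(a,b)$ and a six-dimensional monomial basis.

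Your justification of \emph{strict} positivity for $t>t_*$ has a gap, however. From such a certificate you only obtain $f(t,a,b)\ge 0$ on the feasible region; the ``dominant $t^4$ behaviour'' does \emph{not} rule out $f$ having, for some fixed $(a,b)\in[-1,1]^2$, a double root at a finite $t_0 > t_*$ --- a degree-$4$ polynomial with positive leading coefficient can be nonnegative on $[t_*,\infty)$ while vanishing at an interior point. The paper closes this by inspecting the certificate itself: the multiplier of $(t-t_*)$ collapses to a constant, namely $v^\top Q_2 v = \lambda = \tfrac{91+31\sqrt{13}}{20250} > 0$, so the term $(t-t_*)\lambda$ is strictly positive for every $t>t_*$ independently of $(a,b)$, which forces $f>0$ there. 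Your plan should therefore either enforce a strictly positive $(t-t_*)$-multiplier in the SOS ansatz (e.g.\ a positive constant term), or verify after the fact that the multiplier cannot vanish on the feasible region --- nonnegativity of the certificate alone does not suffice.
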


\begin{proof}
Let us first establish that $f(t,a,b) \geq 0$
for all $t \geq t_*$ and $a,b \in [-1,1]$.
This would be evident if we managed to find a representation of $f$ of the form
\begin{equation}
  f(t,a,b) = v(t,a,b)\tp \of[\Big]{Q_1 + (t-t_*) Q_2 + (1-a^2) Q_3 + (1-b^2) Q_4} v(t,a,b),
  \label{eq:f sos}
\end{equation}
where $Q_i$ are fixed positive semi-definite matrices
and $v(t,a,b)$ is a vector whose entries depend on $t,a,b$
(e.g., are monomials in them).
Generally such ``sums of squares'' representations can be found using semi-definite programming (see Lectures~10--14 of Hamza Fawzi~\cite{Fawzi} or Section~3.4.4 of~\cite{SDPbook}).
In our case this is a semi-definite feasibility problem where the matrices $Q_i$ are subject to $Q_i \succeq 0$ and a set of linear constraints obtained by comparing the coefficients of the polynomials in~\cref{eq:f,eq:f sos}.

We found the following exact solution of this problem:
\begin{align}
  v(a,b,t) = \mx{1 \\ a \\ b \\ ab \\ t \\ t^2}, \quad
  Q_1 =
  \mx{
    \alpha & \beta & \beta & \gamma & \delta & \varepsilon  \\
    \beta & \zeta & \eta & \theta & -\frac{3}{1000} & \frac{1}{200} \\
    \beta & \eta & \zeta & \theta & -\frac{3}{1000} & \frac{1}{200} \\
    \gamma & \theta & \theta & \iota & -\frac{3}{1000} & \frac{1}{200} \\
    \delta & -\frac{3}{1000} & -\frac{3}{1000} & -\frac{3}{1000} & \kappa & -\frac{1}{2} \\
    \varepsilon & \frac{1}{200} & \frac{1}{200} & \frac{1}{200} & -\frac{1}{2} & 1
  },
\end{align}
\begin{align}
  Q_2 &=
  \mx{
    \lambda & 0 & 0 & 0 & 0 & 0 \\
    0 & 0 & 0 & 0 & 0 & 0 \\
    0 & 0 & 0 & 0 & 0 & 0 \\
    0 & 0 & 0 & 0 & 0 & 0 \\
    0 & 0 & 0 & 0 & 0 & 0 \\
    0 & 0 & 0 & 0 & 0 & 0
  }, &
  Q_3 &=
  \mx{
    \mu & 0 & \theta & 0 & 0 & 0 \\
    0 & 0 & 0 & 0 & 0 & 0 \\
    \theta & 0 & \nu & 0 & 0 & 0 \\
    0 & 0 & 0 & 0 & 0 & 0 \\
    0 & 0 & 0 & 0 & 0 & 0 \\
    0 & 0 & 0 & 0 & 0 & 0
  }, &
  Q_4 &=
  \mx{
    \mu & \theta & 0 & 0 & 0 & 0 \\
    \theta & \nu & 0 & 0 & 0 & 0 \\
    0 & 0 & 0 & 0 & 0 & 0 \\
    0 & 0 & 0 & 0 & 0 & 0 \\
    0 & 0 & 0 & 0 & 0 & 0 \\
    0 & 0 & 0 & 0 & 0 & 0
  },
\end{align}
where the values of the missing matrix entries are as follows:
\begin{align}
  \alpha &= \frac{973343 + 240821 \sqrt{13}}{371790000}, &
  \beta &= \frac{33139 - 617 \sqrt{13}}{82620000}, &
  \gamma &= \frac{20 - \sqrt{13}}{45000}, &
  \delta &= - \frac{1721 + 62 \sqrt{13}}{81000}, \\
  \varepsilon &= \frac{25 - 2 \sqrt{13}}{600}, &
  \zeta &= \frac{21592 - 2903 \sqrt{13}}{185895000}, &
  \eta &= \frac{-2 + \sqrt{13}}{45000}, &
  \theta &= \frac{-91 + 617 \sqrt{13}}{82620000}, \\
  \iota &= \frac{-47 + 127 \sqrt{13}}{4590000}, &
  \kappa &= \frac{37+\sqrt{13}}{150}, &
  \lambda &= \frac{91 + 31 \sqrt{13}}{20250}, &
  \mu &= \frac{8203 - 1325 \sqrt{13}}{743580000}, \\ &&&&&&
  \nu &= \frac{871 + 127 \sqrt{13}}{9180000}.
\end{align}
The correctness of this decomposition can be verified by plugging these values into \cref{eq:f sos} and comparing the resulting polynomial with \cref{eq:f}.

To verify that $Q_i$ are positive semi-definite, we can simply compute their eigenvalues.
The non-zero eigenvalues of $Q_1$ are
\begin{equation}
  \begin{array}{c}
    1.255390507\dots \\
    0.020376547\dots \\
    0.000059985\dots \\
    0.000024167\dots \\
    0.000015112\dots
  \end{array}
\end{equation}
The remaining matrices $Q_2, Q_3, Q_4$ have rank one and their only
non-zero eigenvalues are
\begin{align}
  \frac{91 + 31 \sqrt{13}}{20250}, &&
  \frac{39377 + 4481 \sqrt{13}}{371790000}, &&
  \frac{39377 + 4481 \sqrt{13}}{371790000}.
\end{align}

To prove that $f(a,b,t) > 0$ when $t > t_*$, expand \cref{eq:f sos} to obtain
\begin{equation}
  f(t,a,b)
  = v\tp Q_1 v
  + (t-t_*) v\tp Q_2 v
  + (1-a^2) v\tp Q_3 v
  + (1-b^2) v\tp Q_4 v.
\end{equation}
Note that all terms are non-negative when $t \geq t_*$ and $a,b \in [-1,1]$.
Since $\lambda > 0$, the second term
\begin{equation}
  (t-t_*) v\tp Q_2 v = (t-t_*) \lambda
\end{equation}
is strictly positive when $t > t_*$.
\end{proof}

\newcommand{\M}[1]{\href{https://reference.wolfram.com/language/ref/#1.html}{\texttt{#1}}}

The above solution was found using \textit{Mathematica}.
First, we used the \M{SemidefiniteOptimization} function to find an initial solution.
Then, for all sufficiently small matrix entries, we included additional linear constraints that force them to be exactly zero.
This resulted in a preliminary solution with sufficiently many zeroes.
Our hope was to convert this to an exact algebraic solution using the \M{RootApproximant} function.
However, this would work only if the solution is isolated (i.e., cannot be perturbed to other nearby solutions) and of sufficiently high accuracy.
Unfortunately, the built-in \M{SemidefiniteOptimization} function cannot obtain high-accuracy solutions.

To overcome this, we had to rely on the generic \M{NMinimize} and \M{FindMinimum} routines that support \M{WorkingPrecision} option.
However, since they do not support semi-definite constraints, we had to use the preliminary solution to choose a sufficiently simple ansatz matrix $A_i$ and set $Q_i = A_i\tp A_i$.
This automatically guarantees that all $Q_i$ are positive semi-definite.
By further tweaking the ansatz we managed to obtain an isolated solution.

To get an exact algebraic solution, we supplied this isolated numerical solution as an initial point to the \M{FindMinimum} routine and, by increasing the \M{WorkingPrecision} option, dialed up the accuracy to several hundreds of digits.
Finally, applying \M{RootApproximant} to $Q_i$, followed by \M{ToRadicals}, produced the above exact algebraic solution.
\end{document}